\newtheorem{theorem}{Theorem}
\newtheorem{definition}[theorem]{Definition}
\newtheorem{lemma}[theorem]{Lemma}
\newtheorem{remark}[theorem]{Remark}
\newenvironment{proof}{\paragraph{Proof:}}{\hfill$\square$}
\begin{document}
\title{
	Probability equivalent level for CoVaR and VaR in bivariate Student-\textit{t} copulas with application to foreign exchange risk monitoring}

\author{Daniela I. Flores-Silva (daniela.flores@gm.uca.es),\\  Miguel A. Sordo (mangel.sordo@uca.es),\\ Alfonso Su\'arez-Llorens 
(alfonso.suarez@uca.es)}

\date{September 02, 2025}
\maketitle

{\small{Departamento de Estadística e Investigación Operativa, Facultad de Ciencias, Universidad de Cádiz, Spain.}}

Version: September 02, 2025

\maketitle

\begin{abstract}
We extend the ``probability-equivalent level of VaR and CoVaR" (PELCoV) methodology to accommodate bivariate risks modeled by a Student-\textit{t} copula, relaxing the strong dependence assumptions of earlier approaches and enhancing the framework's ability to capture tail dependence and asymmetric co-movements. While the theoretical results are developed in a static setting, we implement them dynamically to track evolving risk spillovers over time. We illustrate the practical relevance of our approach through an application to the foreign exchange market, monitoring the USD/GBP exchange rate with the USD/EUR series as an auxiliary early warning indicator over the period 1999–2024. Our results highlight the potential of the extended PELCoV framework to detect early signs of risk underestimation during periods of financial stress.

\end{abstract}

\textbf{Keywords:} Systemic risk, contagion risk measure,  value at risk, conditional value at risk, Student-\textit{t} distribution.

\section{Introduction}
Value at Risk (VaR) and Conditional Value at Risk (CoVaR) are two widely used risk measures that have garnered increasing attention from both researchers and practitioners in recent years, due to their effectiveness in capturing tail risk and systemic interdependencies in financial markets. In this study, given a bivariate risk \((X, Y)\), we build upon these concepts by developing a unified framework known as the ``probability-equivalent level of VaR and CoVaR'' (PELCoV), originally introduced by Ortega-Jiménez et al.\ (2024) under the assumption that the underlying vector exhibits strictly stochastically increasing dependence.

We extend this framework to the case of a bivariate random vector governed by a Student-\textit{t} copula, which does not satisfy the aforementioned dependence property. While the theoretical analysis is conducted within a static framework, practical implementation requires a dynamic setting, as the dependence structure of the vector may vary over time.

Accordingly, we apply our findings to risk monitoring in the foreign exchange market, using bivariate time series models based on time-varying copulas. Specifically, we monitor the risk exposure associated with the USD/GBP exchange rate (i.e., the US dollar to British pound) by observing the auxiliary series USD/EUR (US dollar to euro), and analyzing the associated \(\text{PELCoV}\) over  the period from January 1, 1999, to April 1, 2024.

\subsection{Background}
    
    Value at Risk (VaR) is a widely used risk measure that quantifies the potential loss a financial institution or portfolio may face over a specified time horizon, under normal market conditions, and at a given confidence level. Formally, let $Y$ denote a random variable representing losses, with cumulative distribution function (CDF) $F_Y$. The VaR at probability level $v \in (0,1)$ is defined as:
	
	\[
	\text{VaR}_{v}[Y] = F_Y^{-1}(v) = \inf \left\{ x : F_Y(x) \geq v \right\}.
	\]
	This quantile-based measure identifies the loss threshold that will not be exceeded with probability $v$. Since its adoption by the Basel Committee on Banking Supervision (BCBS) in the 1990s, VaR has been extensively utilized in regulatory frameworks, including Basel III/IV for banking and Solvency II for insurance. For a comprehensive overview of the method and its applications in theory and practice, see  \citet{jorion2000risk}.  

A major limitation of Value at Risk (VaR) is its inability to account for the interdependencies between financial institutions, which are essential for assessing systemic risk. Systemic risk refers to the potential for a collapse or significant disruption in the entire financial system, rather than just individual institutions. This concept has been extensively studied in the literature (see   \citet{bisias2012survey} and  \citet{benoit2017risks} for comprehensive surveys). One of the primary channels through which systemic risk manifests is financial contagion, where losses in one institution spread to others due to their interconnected exposures (see  \citet{glasserman2016contagion}). Substantial research has been dedicated to quantifying, estimating,  and comparing different measures of contagion risk, including significant contributions from  \citet{chen2014asset},  \citet{girardi2013systemic},  \citet{mainik2014dependence},  \citet{sordo2015comparison,sordo2018stochastic},  \citet{tobias2016covar},  \citet{acharya2017measuring}, \citet{ortega2021stochastic,ortega2024}, \citet{beutner2024residual} and   \citet{francq2025inference}.

To address VaR’s shortcomings in capturing systemic risk, \citet{tobias2016covar}   introduced Conditional Value at Risk (CoVaR), an extension of VaR that explicitly incorporates interdependencies. CoVaR measures the risk of a financial institution conditional on the distress of another institution, offering a more comprehensive perspective on systemic vulnerabilities. Formally, the co-value-at-risk (CoVaR) of $Y$ at level $v \in \left( 0,1\right)$, given that $X$ is at level $u \in \left( 0,1\right)$, denoted as CoVaR$_{v, u}\left[ Y|X\right]$, is defined as the VaR of the conditional variable $\left[ Y|X=\text{VaR}_u\left[X\right]\right]$ at risk level $v$, as follows:
		\[
	\text{CoVaR}_{v, u}\left[ Y|X\right] = \text{VaR}_v\left[ Y|X=\text{VaR}_u\left[X\right]\right] = F^{-1}_{Y|X=\text{VaR}_u\left[X\right]}(v).
	\]
Although VaR and CoVaR offer different approaches to risk monitoring, they can be combined to improve risk assessment. In a recent study, \citet{ortega2024} explored a strategy that combines conditional and unconditional VaR, investigating the conditions that establish the ordering between VaR and CoVaR. For a random variable $Y$ representing financial risk, this strategy requires the presence of a covariate $X$ whose dependence structure with $Y$ is easily observable,  although not necessarily strong, to ensure that $X$ effectively contributes to monitoring the risk of $Y$.  The concept is straightforward: suppose that, initially, the risk is monitored using VaR at level \( v \) of \( Y \), for some \( v \in (0,1) \). Assume that \( X \) has reached a risk level \( u \) such that \( \text{CoVaR}_{v,u}[Y|X] > \text{VaR}_v[Y] \). A prudent investor who prioritizes minimizing potential losses and safeguarding capital -even at the cost of potentially lower returns- such as institutional investors, risk-averse individuals, or regulatory bodies, should recognize that VaR underestimates the spillover effect. In this case, it would be prudent to replace $\text{VaR}_v[Y] $ with \( \text{CoVaR}_{v,u}[Y|X] \) to provide a more cautious risk assessment. Given \( v \in (0,1) \), the approach involves determining the set of risk levels  
\begin{equation}\label{setA}  
    A(v) = \{ u_v \in (0,1) : \text{CoVaR}_{v,u_v}[Y|X] = \text{VaR}_v[Y] \}  
\end{equation}  
and analyzing the relative order of VaR and CoVaR in the intervals between consecutive points of \( A(v) \). Each \( u_v \in A(v) \) represents a probability level at which CoVaR and VaR are equal; therefore, the focus is on understanding how their relationship changes in the intermediate regions.

The following definition, taken from \citet{ortega2024} formalizes this concept. 
\begin{definition}\label{pelcovdef} Let $(X,Y)$ be a random vector and let   $v\in(0,1)$. A probability equivalent level of CoVaR-VaR at risk level $v$  (PELCoV$_v$) for $X$ is any $u_v\in (0,1)$ that satisfies $\text{CoVaR}_{v,u_v}\left[ Y|X\right]=\text{VaR}_v\left[ Y\right]$.
\end{definition}
For mathematical tractability, we assume throughout this paper that the random vector \( (X,Y) \) has absolutely continuous, strictly increasing marginal distribution functions \( F_X(\cdot) \) and \( F_Y(\cdot) \), as well as a strictly increasing conditional distribution \( F_{Y|X=x}(\cdot) \) for all \( x \), defined on their respective supports. We refer to these properties as the regularity conditions. \citet{ortega2024} proved that, under these regularity conditions, a PELCoV$_v$ depends solely on the copula $C$ of the random vector $(X,Y).$  To formalize this, we recall the definition of a copula. According to Sklar's theorem, if $K$ is the joint distribution of the random vector $(X,Y),$  we can express it as
\begin{equation*}
	K(x, y)=C\left(F_X(x), F_Y(y)\right),
\end{equation*}
where $C$ is the copula, the joint distribution function of the vector $(U,V),$ with $U=F_X\left( X \right)$ and $V= F_Y\left(Y\right)$. The copula function \( C \) captures the dependence structure between the components of the vector, independent of their marginal distributions. Under the regularity conditions, \( C \) is unique and differentiable (see  \citet{nelsen2006introduction}).

The following theorem summarizes key properties of PELCoV$_v$, as established by  \citet{ortega2024}.
\begin{theorem}\label{basic} Let $\left( X, Y\right)$ be a random vector satisfying the regularity conditions  with copula $C$ and let $v \in (0,1).$ Then, \newline
(a) CoVaR$_{v,u}[Y\mid X]\ge  \text{VaR}_v[Y]$ (respectively $\le,=$) if, and only if  $\partial_1 C(u,v)\le v$ (respectively $\ge,=$). \newline
(b) $CoVaR_{v,u}[Y\mid X]$  is continuous in $u\in (0,1)$ if, and only if,  $\partial_1 C(u,v)$ is continuous in $u\in (0,1)$. \newline
(c) If $\partial_1 C(u,v)$ is continuous in $u\in (0,1)$, then there exists at least one $u_{v}\in (0,1)$ such that  CoVaR$_{v, u_v}[Y| X] = \text{VaR}_v[Y]$.
\end{theorem}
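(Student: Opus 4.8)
The plan is to reduce everything to two ingredients. The first is the identity, valid under the regularity conditions, that expresses the conditional distribution of $Y$ given $X$ through the copula: for all $x,y$,
\[
F_{Y\mid X=x}(y)=\partial_1 C\bigl(F_X(x),F_Y(y)\bigr).
\]
I would obtain this from Sklar's representation $K(x,y)=C(F_X(x),F_Y(y))$ by differentiating in $x$, or from the density factorization $f_{X,Y}(x,y)=c(F_X(x),F_Y(y))f_X(x)f_Y(y)$ together with the substitution $s=F_Y(t)$; it is the standard statement that $\partial_1 C(u,\cdot)$ is the conditional distribution function of $V=F_Y(Y)$ given $U=F_X(X)=u$. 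The second ingredient is the elementary fact that, for a continuous and strictly increasing distribution function $F$, one has $F^{-1}(p)\ge y\iff p\ge F(y)$ (and likewise with $\le,<,>,=$), which under the regularity conditions applies to $F_Y$ and to every $F_{Y\mid X=x}$.

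Part (a) is then immediate. Writing $x_u=\mathrm{VaR}_u[X]=F_X^{-1}(u)$ and $y_v=\mathrm{VaR}_v[Y]=F_Y^{-1}(v)$, the inversion fact gives $\mathrm{CoVaR}_{v,u}[Y\mid X]=F_{Y\mid X=x_u}^{-1}(v)\ge y_v\iff v\ge F_{Y\mid X=x_u}(y_v)$, and the identity gives $F_{Y\mid X=x_u}(y_v)=\partial_1 C\bigl(F_X(x_u),F_Y(y_v)\bigr)=\partial_1 C(u,v)$ since $F_X(F_X^{-1}(u))=u$ and $F_Y(F_Y^{-1}(v))=v$; hence $\mathrm{CoVaR}_{v,u}[Y\mid X]\ge\mathrm{VaR}_v[Y]\iff\partial_1 C(u,v)\le v$, and the $\le$ and $=$ versions follow by the same chain with the inequalities reversed or replaced by equalities. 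For part (c), the function $C(\cdot,v)$ is $1$-Lipschitz, hence absolutely continuous, so $\int_0^1\partial_1 C(u,v)\,du=C(1,v)-C(0,v)=v$; if $\partial_1 C(\cdot,v)$ is continuous on $(0,1)$ it cannot stay strictly above $v$ throughout, nor strictly below $v$ throughout (otherwise the integral could not equal $v$), so by the intermediate value theorem $\partial_1 C(u_v,v)=v$ for some $u_v\in(0,1)$, and part (a) yields $\mathrm{CoVaR}_{v,u_v}[Y\mid X]=\mathrm{VaR}_v[Y]$.

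Part (b) is the one requiring real work. First I would reformulate: under the regularity conditions $w\mapsto\partial_1 C(u,w)=F_{Y\mid X=F_X^{-1}(u)}(F_Y^{-1}(w))$ is, for each fixed $u$, a continuous strictly increasing bijection of $(0,1)$, so there is a unique $\gamma_v(u)\in(0,1)$ with $\partial_1 C(u,\gamma_v(u))=v$, and $\mathrm{CoVaR}_{v,u}[Y\mid X]=F_Y^{-1}(\gamma_v(u))$; as $F_Y^{-1}$ is a homeomorphism onto its range, continuity of $u\mapsto\mathrm{CoVaR}_{v,u}[Y\mid X]$ is equivalent to continuity of $\gamma_v$. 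Strict monotonicity of $\partial_1 C(u,\cdot)$ together with the defining relation yields the duality, valid for every threshold $w\in(0,1)$: $\gamma_v(u)<w\iff\partial_1 C(u,w)>v$, $\gamma_v(u)>w\iff\partial_1 C(u,w)<v$, and $\gamma_v(u)=w\iff\partial_1 C(u,w)=v$ (the case $w=v$ being part (a) again). I would then show that $\gamma_v$ and $u\mapsto\partial_1 C(u,v)$ have the same continuity points. One direction is comparatively routine: a discontinuity of $\gamma_v$ at $u_0$, say $\gamma_v(u_n)\to\beta\neq\gamma_v(u_0)=:w_0$ along some $u_n\to u_0$, forces, through the duality relations and monotonicity in $w$, the functions $u\mapsto\partial_1 C(u,w)$ to cross the level $v$ near $u_0$ for every $w$ strictly between $w_0$ and $\beta$, and hence to be discontinuous there. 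The obstacle is to bring this back to the single level $v$ appearing in the statement: here I would use that $\partial_1 C$ exists everywhere, so each $u\mapsto\partial_1 C(u,w)$ is the derivative of the Lipschitz map $C(\cdot,w)$, hence a Darboux function with no jump discontinuities — wherever a one-sided limit exists it equals the value — and combine this with the strict monotonicity of $\partial_1 C(u_0,\cdot)$ to rule out a discontinuity of $\gamma_v$ supported on an interval of thresholds disjoint from $v$. Coupling the Darboux property with strict monotonicity is the delicate step; everything else is bookkeeping built on the identity of the first paragraph.
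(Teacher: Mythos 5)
First, a remark on the comparison itself: the paper does not prove Theorem \ref{basic} --- it imports the statement from \citet{ortega2024} --- so there is no in-paper argument to measure your proposal against, and I can only judge it on its own terms. Your treatment of parts (a) and (c) is correct and complete: (a) is exactly the standard reduction via $F_{Y\mid X=x}(y)=\partial_1 C(F_X(x),F_Y(y))$ together with quantile inversion, and your proof of (c) via $\int_0^1\partial_1 C(u,v)\,du=C(1,v)-C(0,v)=v$ followed by the intermediate value theorem is clean and self-contained (arguably nicer than an argument based on boundary limits of $\partial_1 C(\cdot,v)$, which need not exist for a general copula).

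Part (b), however, is not proved. You correctly reduce the claim to the assertion that $\gamma_v(u)=\bigl(\partial_1 C(u,\cdot)\bigr)^{-1}(v)$ and $u\mapsto\partial_1 C(u,v)$ have the same continuity points, and you correctly observe that a discontinuity of $\gamma_v$ at $u_0$, say $\gamma_v(u_n)\to\beta\ne w_0=\gamma_v(u_0)$, only forces $u\mapsto\partial_1 C(u,w)$ to be discontinuous at $u_0$ for $w$ strictly between $w_0$ and $\beta$ --- an interval of \emph{second arguments} that need not contain $v$, since $w_0=\gamma_v(u_0)$ can lie entirely on one side of $v$. Everything then hinges on transporting this information back to the single level $v$ appearing in the statement, and that is precisely the step you do not carry out: the appeal to the Darboux property of derivatives combined with strict monotonicity of $\partial_1 C(u_0,\cdot)$ is announced as an intention rather than executed, and it is not evident that it suffices, because the Darboux property constrains the range of $\partial_1 C(\cdot,w)$ on subintervals but does not by itself relate the oscillation of $\partial_1 C(\cdot,w)$ at $u_0$ for $w\in(w_0,\beta)$ to that of $\partial_1 C(\cdot,v)$. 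In addition, the converse implication --- continuity of $\gamma_v$ (equivalently of CoVaR) implies continuity of $\partial_1 C(\cdot,v)$ --- is not addressed at all, and it suffers from the symmetric difficulty: a discontinuity of $\partial_1 C(\cdot,v)$ at $u_0$ whose $\liminf$ and $\limsup$ both lie on the same side of $v$ does not visibly force $\gamma_v$ to jump. As it stands, part (b) is a reasonable programme with the hard step missing in both directions, so the proposal cannot be accepted as a proof of the full theorem.
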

\begin{remark}\label{nota}
According to Part (a) of Theorem \ref{basic}, for any \( v \in (0,1) \), the elements of \( A(v) \) or PELCoV\(_v\)s are the solutions to $\partial_1 C(u,v) = v.$\end{remark}

\subsection{Motivation}

Assume that the copula \( C \) of the vector \( (X,Y) \) satisfies that \( \partial_1 C(u,v) \) is continuous for all \( u \in (0,1) \), so that, by part (c) of Theorem \ref{basic}, the set \( A(v) \) defined in \eqref{setA} is nonempty. \citet{ortega2024} showed that under an additional positive dependence property between \( X \) and \( Y \), known as \textit{Strict Stochastically Increasing}\footnote{The Strictly Stochastically Increasing (SSI) property is a slight modification of the Stochastically Increasing (SI) property, also known as Positive Regression Dependence (PRD), a concept introduced by   \citet{lehmann1966theorem}. The SI concept does not require the growth of the conditional probability to be strict.} (SSI), the set \( A(v) \) defined in \eqref{setA} is a singleton. Recall that \( Y \) is said to be SSI in \( X \), denoted \( Y \uparrow _{SSI} X \), if the survival probability \( \text{Pr}\{Y > y \mid X = x\} \) is a strictly increasing function of \( x \), for all \( y \).
Intuitively, if \( Y \uparrow _{SSI} X \), we expect \( Y \) to take large values as the conditional random variable \( X \) increases. The SSI property is characterized by the copula: \( Y \uparrow _{SSI} X \) is equivalent to the condition that the partial derivative
\[
\partial_1 C(u,v) = \text{Pr}\{ V \leq v \mid U = u \}
\]
is a strictly decreasing function of \( u \), for all \( v \), where \( (U,V) \) is defined as above. Under the assumption that \( Y \uparrow _{SSI} X \), the probability level \( u_v \) serves as an alert system, indicating when VaR begins to underestimate the risk relative to CoVaR. Using this methodology,  \citet{ortega2024} derive and interpret the PELCoV$_v$ for various copula families that satisfy the SSI property for positive values of their dependence parameters, including the bivariate Gaussian, Farlie-Gumbel-Morgenstern, Frank, Clayton, and Ali-Mikhail-Haq copulas (see  \citet{nelsen2005copulas}, for formulas and further details on these copulas).

This approach is particularly relevant in financial econometrics, where a fundamental method for modeling relationships between positively dependent random variables is the classic regression framework:  
\begin{equation}\label{regression}  
    Y = \phi(X) + \sigma \varepsilon,  
\end{equation}  
where \( \phi: \mathbb{R} \to \mathbb{R} \) is a strictly increasing function, and \( \varepsilon \) represents random noise with mean zero and unit variance, independent of \( X \). This model describes how the response variable \( Y \), often representing financial quantities such as asset returns, volatility measures, or risk premia, evolves as a function of the explanatory variable \( X \) while incorporating stochastic fluctuations. In this setting, the stochastic monotonicity property \( Y \uparrow_{SSI} X \) holds trivially.  Moreover, according to Proposition 9 in \citet{ortega2024}, the PELCoV\(_v\), given by  
\[
u_v = F_X\left( \phi^{-1} \left(\text{VaR}_v[Y] - \sigma \text{VaR}_v[\varepsilon] \right) \right),
\]  
is increasing with respect to \( v \in (0,1) \) whenever \( \varepsilon \) has a log-concave density function. This condition includes, in particular, the case where \( \varepsilon \) follows a normal distribution.  However,  any slight modification of model \eqref{regression}, such as  
\begin{equation*}  
    Y = \phi(X) +  \sigma(X) \varepsilon,  
\end{equation*}  
where   \( \sigma: \mathbb{R} \to \mathbb{R^+} \) is an increasing function,  can cause the vector \( (X,Y) \) to no longer satisfy the property \( Y \uparrow_{SSI} X \). This occurs, for example, when \( \varepsilon \) follows a normal distribution, since the conditional variable  
\[
\{Y \mid X = x\}=\phi(x)+\sigma(x)\varepsilon  \sim  N(\phi(x),\sigma(x))  
\]  
does not satisfy the SSI property unless \( \sigma(x) \) is constant for all \( x \). This observation emphasizes the necessity of exploring broader approaches beyond SSI-based methodologies for applying
PELCoV\(_v\).

\subsection{Aim of the paper}
A crucial dependence structure in the econometric analysis of financial time series is the Student-\textit{t} copula. Unlike the Gaussian copula, the Student-\textit{t} copula provides non-zero tail dependence, making it a superior tool for modeling financial markets, which often experience extreme co-movements during periods of turmoil. Its ability to capture joint tail risk is essential for rigorous risk management and portfolio modeling. Indeed, the Student-\textit{t} copula remains a popular parametric choice in risk management and financial econometrics, as highlighted by \cite{ShyamalkumarTao2022}, who explore its effectiveness in modeling multivariate financial return data. \cite{ShimLee2017} further demonstrate how integrating the Student-\textit{t} copula with a GARCH framework accommodates skewness, heavy tails, volatility clustering, and evolving conditional dependencies in financial time series. More recently, \cite{FILIPIAK2025105490} provide evidence of the continued relevance and practical advantages of the Student-\textit{t} copula in modern financial econometrics.

The cumulative distribution function of the univariate Student-\textit{t}-distribution with \( n \) degrees of freedom is given by:
\[
t_n(x) =\int_{-\infty}^{x} \frac{\Gamma\left(\frac{n+1}{2}\right)}{\sqrt{n\pi} \, \Gamma\left(\frac{n}{2}\right)} \left(1 + \frac{s^2}{n}\right)^{-\frac{n+1}{2}}ds, \ \ x \in \mathbb{R},
\]
where \( \Gamma(\cdot) \) denotes the Gamma function.  The bivariate Student-\textit{t} copula, for $(u,v)$ in $[0,1]^2,$  is defined as
$$
C_n(u,v)=\int_{-\infty}^{t_n^{-1}(u)}\int_{-\infty}^{t_n^{-1}(v)}\frac{1}{2\pi\sqrt{1-\rho^2}}\left(  1+\frac{s_1^2+s_2^2-2\rho s_1 s_2}{n(1-\rho^2)}  \right)^{-\frac{n+2}{2}} ds_1ds_2, $$
where \( n>1 \) and \( \rho \in (-1,1) \) are the copula parameters. 
A random vector $(X,Y)$ with a bivariate Student-\textit{t} copula exhibits positive quadrant dependence\footnote{Note that a vector $(X,Y)$ is PQD (respectively, SI) if and only if its copula is PQD (respectively, SI). See Theorem 3.10.19 in  \citet{muller2002comparison} and the discussion in \citet{cai2012invariant}. } (PQD) when $\rho>0,$ meaning that 
$$ P[X>x,Y>y]\ge P[X>x]P[Y>y], \text{ for all } x,y \in \mathbb{R}.$$ 
Intuitively, this indicates that $X$ and $Y$ are more likely to attain large values simultaneously than if they were independent with the same marginal distributions. However, $(X,Y)$  does not satisfy the property SSI, as the conditional scale of the bivariate $t_n$ diverges as $x \to \pm \infty.$ For a detailed discussion of these properties, see \citet{joe2014dependence} , p. 182. Figure \ref{notSI} illustrates that the function $v\to\partial_1C(u,v)$ does not strictly decrease in \( u \), for all \( v \), which is a necessary condition for the SI property.

 \begin{figure} 
\centering
\includegraphics[scale=0.5]{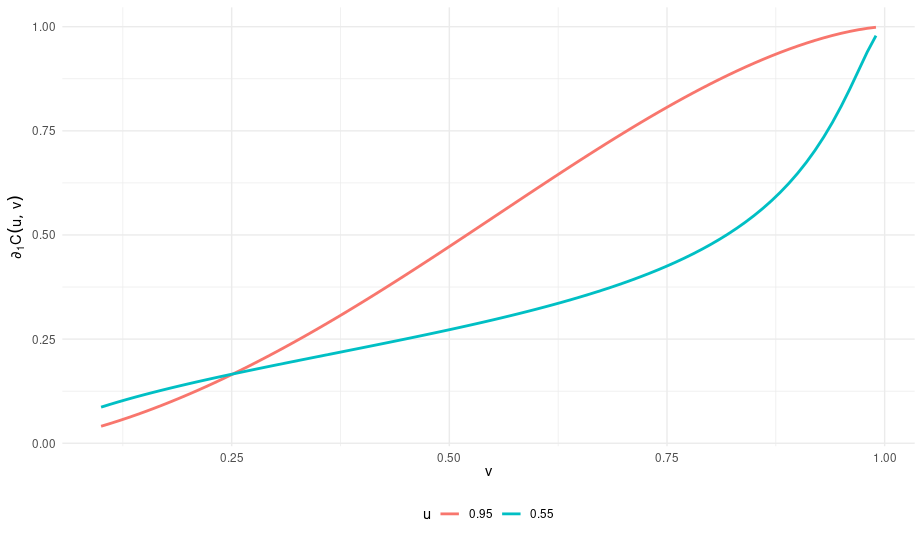}
\caption{Function $v\to\partial_1C(u,v)$, where $C(u,v)$ is the Student \textit{t}-copula with parameters $\rho=0.4, n=2$ and  fixed values of $u=0.95,0.55$. }\label{notSI}
\end{figure}

It is therefore relevant and the objective of this study, to address the following question regarding PELCoV\(_v\) in the case of a bivariate random vector whose dependence structure is governed by a Student-\textit{t} copula: How many elements are in the set \( A(v) \), and how can this information be used for risk assessment and monitoring in financial contexts? To illustrate the practical implications of our findings, we apply them to the risk assessment in the foreign exchange market
using bivariate time series modeled with time-varying Student-\textit{t} copulas. Following the approach of  \citet{patton2006modelling}, we account for potential dynamic dependence structures over time, assuming that the copula remains a Student-\textit{t} copula throughout the study period while its parameter $\rho$ evolves according to a specified evolution equation. Consequently, this procedure will naturally yield a 
PELCoV$_v$ that also varies over time.

\section{PELCoV$_v$'s in Student-\textit{t} copulas}
  Let $Y$ be a random variable representing the returns or losses of a financial asset or risk. A supervisor monitors this risk using $\text{VaR}_{v}[Y],$ where $v$ is typically set at levels such as $0.95$ or $0.99$, either to mitigate extreme losses or to comply with financial regulations. The supervisor adjusts the investment strategy whenever $Y$ reaches the VaR threshold, which is determined based on the historical evolution of the asset.  

A strategy based on \(\text{PELCoV}_v\) first analyzes \(\text{CoVaR}_{u,v}[Y|X]\) in comparison to \(\text{VaR}_{v}[Y]\) for all \( u \in (0,1) \), considering a given risk factor \( X \) whose dependence structure with \( Y \) is well-defined through a copula \( C \) that satisfies the regularity conditions.  According to Remark \ref{nota}, the elements of $A(v),$ (i.e., the probability equivalent levels of VaR and CoVaR, given \( v \)), are the solutions to the equation  \(
\partial_1 C(u,v) = v.
\)  
The relative positions of \(\text{VaR}_{v}[Y]\) and \(\text{CoVaR}_{u,v}[Y|X]\) for values of \( u \) lying between successive \(\text{PELCoV}_v\) levels determine which of the two measures is the more conservative for each \( u \in (0,1) \). In this section, we examine this problem when \( C \) is a Student-\( t \) copula with \( \rho > 0 \).
\begin{lemma}\label{lemita}
      Let \((X, Y)\) be a random vector following a Student-\(t\) copula characterized by parameters \(\rho > 0\) and \(n> 1 \).     The function $h(u)= \partial_1 C(u,v)$ for all $u\in (0,1)$ satisfies the following properties.\newline
          (a) The function \( h(u) \) is given by 
      \begin{equation}\label{hu}
h(u)=t_{n+1}\left(\dfrac{t_{n}^{-1}\left(v\right)-\rho t_{n}^{-1}\left(u\right)}{\sqrt{\dfrac{(n+(t_{n}^{-1}(u))^2)(1-\rho^2)}{n+1}}}\right).    
 \end{equation}
      (b) For any \(v \in \left(0, 1\right)\), we define
      \begin{equation}\label{u*}
          u^*=t_n\left(\frac{-\rho n}{t_n^{-1}(v)}\right).
      \end{equation}If \( v > \frac{1}{2} \), then \( u^* < \frac{1}{2} \) and \( u^* \) corresponds to a maximum of \( h \). Conversely, if \( v < \frac{1}{2} \), then \( u^* > \frac{1}{2} \) and \( u^* \) corresponds to a minimum of \( h \). \newline
      (c) The function \( h(u) \) has the following limits at the boundaries of the domain:
    \begin{align}
    \lim_{u\to 0^+}h(u) &= t_{n+1}\left(\frac{\rho\sqrt{n+1}}{\sqrt{1-\rho^2}}\right)=L_0    \\
    \lim_{u\to 1^-}h(u) &= t_{n+1}\left(\frac{-\rho\sqrt{n+1}}{\sqrt{1-\rho^2}}\right)=L_1=1-
    L_0 \nonumber
\end{align}  
  \end{lemma}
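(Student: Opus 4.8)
The plan is to prove the three parts in order, with part (a) carrying the real content and parts (b) and (c) following by elementary calculus once the explicit formula \eqref{hu} is available.

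For part (a), I would start from the fact that $\partial_1 C(u,v) = \Pr\{V \le v \mid U = u\} = \Pr\{S_2 \le t_n^{-1}(v) \mid S_1 = t_n^{-1}(u)\}$, where $(S_1,S_2)$ is a bivariate Student-\textit{t} vector with correlation $\rho$ and $n$ degrees of freedom and $U = t_n(S_1)$, $V = t_n(S_2)$. The key classical fact I would invoke is that the conditional law of $S_2$ given $S_1 = s_1$ is a location--scale Student-\textit{t} with $n+1$ degrees of freedom, namely $(S_2 - \rho s_1)\big/\sqrt{(n+s_1^2)(1-\rho^2)/(n+1)} \sim t_{n+1}$. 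This can either be cited from standard references on the conditional distributions of the multivariate \textit{t}, or derived self-containedly by dividing the explicit bivariate $t$ density by the univariate $t_n$ density of $S_1$ and recognizing the quotient as the stated $t_{n+1}$ density (alternatively, via the representation $(S_1,S_2) = (Z_1,Z_2)/\sqrt{W/n}$ with $(Z_1,Z_2)$ standard bivariate normal of correlation $\rho$ and $W \sim \chi^2_n$ independent). Substituting $s_1 = t_n^{-1}(u)$ then gives exactly \eqref{hu}.

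For part (b), I would make the monotone change of variable $x = t_n^{-1}(u) \in \mathbb{R}$ and abbreviate $q = t_n^{-1}(v)$, so that $h = t_{n+1}(g(x))$ with $g(x) = \sqrt{(n+1)/(1-\rho^2)}\,\phi(x)$ and $\phi(x) = (q - \rho x)/\sqrt{n+x^2}$. Since both $t_{n+1}$ and $t_n$ are strictly increasing, the interior critical points of $h$ and their nature coincide with those of $\phi$. A direct differentiation gives $\phi'(x) = -(\rho n + q x)/(n+x^2)^{3/2}$, whose unique zero is $x^* = -\rho n / q$, i.e.\ $u^* = t_n\!\big(-\rho n / t_n^{-1}(v)\big)$, matching \eqref{u*}. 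The sign of $\phi'$ is the sign of $-(\rho n + q x)$: if $v > \tfrac12$ then $q > 0$, so $\phi$ increases then decreases, $x^*$ is a maximum, and $x^* < 0$ (as $\rho, n > 0$), hence $u^* < \tfrac12$; if $v < \tfrac12$ then $q < 0$, the inequalities reverse, $x^*$ is a minimum, and $x^* > 0$, hence $u^* > \tfrac12$.

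For part (c), I would again use $x = t_n^{-1}(u)$, so that $u \to 0^+$ corresponds to $x \to -\infty$ and $u \to 1^-$ to $x \to +\infty$, and evaluate $\lim \phi(x)$ at $\pm\infty$. Writing $\sqrt{n+x^2} = |x|\sqrt{1 + n/x^2}$ yields $\phi(x) \to \rho$ as $x \to -\infty$ and $\phi(x) \to -\rho$ as $x \to +\infty$; multiplying by $\sqrt{(n+1)/(1-\rho^2)}$ and using continuity of $t_{n+1}$ gives $L_0 = t_{n+1}\!\big(\rho\sqrt{n+1}/\sqrt{1-\rho^2}\big)$ and $L_1 = t_{n+1}\!\big(-\rho\sqrt{n+1}/\sqrt{1-\rho^2}\big)$, while $L_1 = 1 - L_0$ follows from the symmetry $t_{n+1}(-a) = 1 - t_{n+1}(a)$ of the Student-\textit{t} law. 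The main obstacle is the algebra in part (a) --- correctly obtaining (or invoking) the conditional Student-\textit{t} distribution and simplifying its exponent into the clean closed form \eqref{hu}; once that is settled, parts (b) and (c) are routine differentiation and limit computations.
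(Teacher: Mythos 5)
Your proposal is correct and follows essentially the same route as the paper: the explicit formula \eqref{hu} via the conditional $t_{n+1}$ law of the bivariate Student-$t$ (which is precisely the derivation the paper delegates to its footnote citing Aas et al.), the critical point from the sign of $-(\rho n + t_n^{-1}(v)\,t_n^{-1}(u))$ exactly as in the paper's Appendix A, and the boundary limits by letting $x=t_n^{-1}(u)\to\pm\infty$. Your write-up is in fact more detailed than the paper's, which dismisses parts (a) and (c) as ``straightforward.''
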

  \begin{proof}
Part (a) follows directly from straightforward differentiation\footnote{Alternatively, equation \eqref{hu} can be verified in Appendix C.2 of \citet{aas2009pair}, where the authors derive the partial derivative of the Student \textit{t}-copula with respect to its second argument, \(\partial_2 C(u,v)\). By symmetry, the derivative with respect to the first argument, \(\partial_1 C(u,v)\), follows analogously}.  To prove part (b), we note that the equation \( \frac{d}{du} h(u) = 0 \) holds if and only if
\(
-\rho n - t_n^{-1}(u) t_n^{-1}(v) = 0
\)
(see Appendix A for details). This condition determines the unique critical point \( u^* \) of \( h(u) \), given by \eqref{u*}.  The rest follows easily.  Finally, part (c) is straightforward to prove.
\end{proof}

It is insightful to compare the limits \( L_0 \) and \( L_1 \) in Lemma \ref{lemita} with the lower and upper tail dependence indices, respectively. These indices quantify the probability of joint extreme values in a bivariate distribution, which is crucial for risk management. Formally, given a bivariate random vector \( (X, Y) \) with a joint cumulative distribution function \( K \) and marginal distributions \( F_X \) and \( F_Y \), the lower tail dependence index, \( \lambda_L \), is defined by:
\[
\lambda_L = \lim_{u \to 0^+} P\left( Y \leq F_Y^{-1}(u) \mid X \leq F_X^{-1}(u) \right).
\]
whereas the upper tail dependence index, 
\( \lambda_U \), is defined as:
\[
\lambda_U = \lim_{u \to 1^-} P\left( Y > F_Y^{-1}(u) \mid X > F_X^{-1}(u) \right).
\]
These indices are directly derived from the copula that governs the dependence structure of the joint distribution. The general expression for the lower tail dependence index is 
\begin{equation}\label{lambda}
	\lambda_L = \lim_{u \to 0^+} \frac{C(u, u)}{u},
\end{equation}
while the upper tail dependence index is given by  
\[
\lambda_U = \lim_{u \to 1^-} \frac{1 - 2u + C(u, u)}{1 - u}.
\]
The Student-\( t \) copula exhibits both upper and lower tail dependence, meaning that extreme co-movements occur in both directions. This makes it particularly suitable for modeling financial returns, where crises often lead to strong dependence in both market downturns and upturns. For a Student-$t$ copula with \( n \) degrees of freedom and correlation \( \rho \), both indices are equal and given by  
\[
 \lambda_L =\lambda_U = 2t_{n+1} \left( -\sqrt{\frac{(n+1)(1-\rho)}{1+\rho}} \right) = 2t_{n+1} \left( -(1-\rho) \sqrt{\frac{n+1}{1-\rho^2}} \right),
\]
The limits \( L_0 \) and \( L_1 \) of the function $h(u)= \partial_1 C(u,v),$ which is the key function in the study of the PELCoV$_v$ for a given $v$, are given by
\begin{eqnarray}\label{lo}
L_0  & = &\lim_{u \to 0^+} P\left( Y \leq F_Y^{-1}(v) \mid X \leq F^{-1}(u) \right) \nonumber \\ & = & \lim_{u \to 0^+} \frac{C(u, v)}{u} \\ & = & \lim_{u \to 0^+} C_1(u,v) \nonumber
\end{eqnarray}
and 
\begin{eqnarray*}
1-L_1  & = &\lim_{u \to 1^-} P\left( Y > F_Y^{-1}(v) \mid X > F^{-1}(u) \right) \\ & = & \lim_{u \to 1^-} 1-\frac{v-C(u,v)}{1-u} \\ & = & 1-\lim_{u \to 1^-} C_1(u,v).
\end{eqnarray*}
These limits describe the asymptotic behavior of the conditional probability function at the boundaries of its domain. Comparing \eqref{lambda} and \eqref{lo}, the difference between \( L_0 \) and \( \lambda_L \) becomes clear. While \( L_0 \) quantifies the probability that \( Y \) falls below its risk threshold \( F_Y^{-1}(v) \) for a fixed \( v \), given that \( X \) takes extremely low values, \( \lambda_L \) measures the probability that \( Y \) also takes extremely low values given that \( X \) does. $L_1$ and $\lambda_U$ have a similar interpretation. 
 
\vspace{0.5cm}

Given $v\in (0,1)$, studying the elements of \( A(v) \) reduces to analyzing the solutions of the equation \( h(u) = v \).  
The case \( A\left(\frac{1}{2}\right) \) is particularly simple. Using \eqref{hu}, the equation \( h(u) = \frac{1}{2} \) reduces to:  
\[
t_{n+1} \left(\dfrac{-\rho \, t_{n}^{-1}(u)}{\sqrt{\dfrac{(n + (t_{n}^{-1}(u))^2)(1 - \rho^2)}{n+1}}} \right) = \dfrac{1}{2}.
\]
The unique solution to this equation is \( u = \frac{1}{2} \), implying that \( A\left(\frac{1}{2}\right) = \frac{1}{2} \).

 We now study the elements of \( A(v) \) for \( v \neq \frac{1}{2} \). The following lemma will be instrumental in the subsequent analysis. 

\begin{lemma}\label{disp}
 For $n>1,$ the following inequalities hold: \newline
 (a) If $v \in [0,\frac{1}{2}),$ then $ t_{n}^{-1}(v)\leq t_{n+1}^{-1} (v)$.  \newline 
 (b) If $v \in [\frac{1}{2},1),$ then $ t_{n+1}^{-1}(v)\leq t_{n}^{-1} (v).$ 
\end{lemma}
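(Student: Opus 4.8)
The plan is to deduce both parts of the lemma from a single ordering of the two distribution functions, namely
\[
t_{n+1}(x)\ \ge\ t_n(x)\qquad\text{for every }x\ge 0,
\]
that is, from the fact that $t_{n+1}$ is more peaked about the origin than $t_n$. Granting this, part (b) is immediate: for $v\in(\tfrac12,1)$ the point $a:=t_n^{-1}(v)$ is positive, so $t_{n+1}(a)\ge t_n(a)=v$ and, since $t_{n+1}$ is strictly increasing, $t_{n+1}^{-1}(v)\le a=t_n^{-1}(v)$; the case $v=\tfrac12$ is trivial because both quantiles equal $0$. Part (a) then follows from part (b) by the symmetry $t_m^{-1}(v)=-t_m^{-1}(1-v)$, applied to $1-v\in(\tfrac12,1)$ (the endpoint $v=0$ being degenerate). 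Thus the entire argument reduces to the displayed inequality.

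To prove it I would compare the densities $f_m(x)=c_m(1+x^2/m)^{-(m+1)/2}$, with $c_m=\Gamma(\tfrac{m+1}{2})/(\sqrt{m\pi}\,\Gamma(\tfrac m2))$, for $m=n$ and $m=n+1$. A short differentiation gives
\[
\frac{d}{dx}\log\frac{f_{n+1}(x)}{f_n(x)}=\frac{x\,(1-x^2)}{(n+x^2)\,(n+1+x^2)},
\]
which is positive on $(0,1)$ and negative on $(1,\infty)$; hence $x\mapsto\log\bigl(f_{n+1}(x)/f_n(x)\bigr)$ increases on $[0,1]$, then decreases to $-\infty$. Since its value at $x=0$ is $\log(c_{n+1}/c_n)\ge 0$, it is nonnegative on an initial interval $[0,x^{*}]$ and negative on $(x^{*},\infty)$ for a unique $x^{*}>1$; equivalently $f_{n+1}\ge f_n$ on $[0,x^{*}]$ and $f_{n+1}\le f_n$ beyond. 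Setting $\delta(x):=t_{n+1}(x)-t_n(x)=\int_0^x(f_{n+1}-f_n)$, we have $\delta(0)=0$, $\delta$ unimodal on $[0,\infty)$ (nondecreasing up to $x^{*}$, then nonincreasing), and $\delta(+\infty)=\tfrac12-\tfrac12=0$; therefore $\delta\ge 0$ on $[0,\infty)$, which is exactly the desired inequality.

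The step I expect to be the real obstacle is the inequality $c_{n+1}\ge c_n$, equivalently $f_{n+1}(0)\ge f_n(0)$, which orients the density crossing above; without it the crossing pattern could be different. Written out it reads $\sqrt n\,\Gamma(\tfrac n2)\Gamma(\tfrac{n+2}{2})\ge\sqrt{n+1}\,\Gamma(\tfrac{n+1}{2})^{2}$, and mere log-convexity of $\Gamma$ is not quite enough because of the $\sqrt{n}$ versus $\sqrt{n+1}$ factors. A clean route is through the Wallis integrals $I_m=\int_{-\pi/2}^{\pi/2}\cos^m\theta\,d\theta$, for which $c_m=1/(\sqrt m\,I_{m-1})$: the reduction formula gives $I_{m+1}/I_{m-1}=m/(m+1)$, and (since the trigamma function is decreasing) $m\mapsto\log I_m$ is convex, so $I_m/I_{m-1}\le I_{m+1}/I_m$; multiplying these two facts, $(I_m/I_{m-1})^{2}\le m/(m+1)$, i.e.\ precisely $c_{m+1}\ge c_m$. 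Alternatively, a Kershaw-type bound $\Gamma(z+\tfrac12)/\Gamma(z)<z/\sqrt{z+\tfrac14}$ (with $z=n/2$) yields the same inequality after elementary algebra.

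Finally, a remark: the peakedness comparison can also be obtained probabilistically, sidestepping the density computation. With $T_m\stackrel{d}{=}Z/\sqrt{W_m}$, $Z\sim N(0,1)$ independent of $W_m=\tfrac1m\sum_{i=1}^m Z_i^2$, exchangeability gives $W_{n+1}=\mathbb E[\,W_n\mid Z_1^2+\cdots+Z_{n+1}^2\,]$, hence $W_{n+1}\le_{\mathrm{cx}}W_n$ in the convex order; since the survival function $\overline H$ of $\chi^2_1$ is convex on $(0,\infty)$ and $\Pr(|T_m|>x)=\mathbb E[\overline H(x^2W_m)]$, the convex order gives $\Pr(|T_{n+1}|>x)\le\Pr(|T_n|>x)$ for all $x\ge 0$, i.e.\ $t_{n+1}(x)\ge t_n(x)$ on $[0,\infty)$, and letting $x\to 0$ also recovers $c_{n+1}\ge c_n$.
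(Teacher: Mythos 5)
Your proposal is correct, but it takes a genuinely different route from the paper. The paper's proof is a two-line corollary of an external result: it invokes the dispersive ordering of Student-$t$ quantiles in the degrees of freedom, $t_m^{-1}(p)-t_m^{-1}(q)\le t_n^{-1}(p)-t_n^{-1}(q)$ for $m\ge n$ and $0<q<p<1$, cited from Arias et al.\ (2005), and obtains both parts by taking one of $p,q$ equal to $\tfrac12$ (where all the quantiles vanish). You instead prove from scratch the weaker but sufficient ``peakedness'' ordering $t_{n+1}(x)\ge t_n(x)$ for $x\ge 0$, via a single-crossing argument for the densities: your computation of $\frac{d}{dx}\log\bigl(f_{n+1}(x)/f_n(x)\bigr)=x(1-x^2)/\bigl((n+x^2)(n+1+x^2)\bigr)$ is right, the reduction of the whole matter to the normalizing-constant inequality $c_{n+1}\ge c_n$ is correctly identified as the crux, and your Wallis-integral argument for it (the identity $c_m=1/(\sqrt m\,I_{m-1})$, the reduction formula $I_{m+1}/I_{m-1}=m/(m+1)$, and log-convexity of $m\mapsto I_m$ giving $(I_m/I_{m-1})^2\le m/(m+1)$) is sound; the unimodality of $\delta(x)=\int_0^x(f_{n+1}-f_n)$ with $\delta(0)=\delta(+\infty)=0$ then closes the argument, and the passage from peakedness to the two quantile inequalities (monotonicity for $v>\tfrac12$, symmetry for $v<\tfrac12$) is clean. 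The closing probabilistic argument via $W_{n+1}=\mathbb E[W_n\mid Z_1^2+\cdots+Z_{n+1}^2]$ and the convex order is also a valid (and elegant) alternative. What your approach buys is self-containedness and an elementary, verifiable chain of inequalities; what the paper's approach buys is brevity and a strictly stronger structural fact (the dispersive order itself), which encodes more than the median-anchored comparison you establish.
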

\begin{proof}
As established in  \citet{arias2005multivariate}, for any $m\ge n,$  the following inequality holds: 
 $$t_{m}^{-1}(p)-t_{m}^{-1}(q)\leq t_{n}^{-1}(p)-t_{n}^{-1}(q), \text{ for all } 0<q<p<1.$$
 Setting $m=n+1,$ we obtain part (a) by choosing $p=\frac{1}{2}$ and $q=v$ with $v<\frac{1}{2}$ and part (b) by choosing  $p=v$ with $v>\frac{1}{2}$ and $ q=\frac{1}{2}.$
 \end{proof}
\begin{theorem}\label{main_theorem}
Let \((X, Y)\) be a random vector with a Student-\(t\) copula characterized by parameters \(\rho > 0\) and \(n > 1\), and let \(v \in \left(\frac{1}{2}, 1\right)\). Then:
\begin{itemize}
    \item[(a)] The set \( A(v) \) necessarily contains a unique PELCoV\(_v\), denoted as \( u_{v1} \), within the interval \( \left(\frac{1}{2}, v^*\right) \), where \( v < v^* = t_n\left(\frac{t_n^{-1}(v)}{\rho}\right) < 1 \).
    \item[(b)] Additionally, a second PELCoV, denoted as \( u_{v2} \), exists if and only if \( v > L_0 \). If \( u_{v2} \) exists, it lies within the interval \( (0, u^{*}) \), where \( u^* \in \left(0, \frac{1}{2}\right) \) is defined by \eqref{u*}.
    \item[(c)] CoVaR\(_{u,v}[Y|X]\) is strictly less than VaR\(_v[Y]\) if and only if \( u \in (a, u_{v1}) \), where \( a = u_{v2} \) if the second PELCoV\(_v\) exists, and \( a = 0 \) otherwise.
\end{itemize}
\end{theorem}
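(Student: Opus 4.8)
The plan is to reduce the entire statement to the behaviour of the single function $h(u)=\partial_1 C(u,v)$ from \eqref{hu}. By Remark \ref{nota}, $A(v)$ is precisely the solution set of $h(u)=v$; by Theorem \ref{basic}(a), $\text{CoVaR}_{v,u}[Y\mid X]<\text{VaR}_v[Y]$ holds exactly where $h(u)>v$. So all three parts are statements about the graph of $h$ relative to the horizontal line at height $v$, and the first task is to determine the global shape of $h$ when $v>\tfrac12$. The function $h$ is continuous on $(0,1)$, and by Lemma \ref{lemita}(b) it has a unique critical point $u^*\in(0,\tfrac12)$, which is a maximum; hence $h$ is strictly increasing on $(0,u^*)$ and strictly decreasing on $(u^*,1)$, running from the limit $L_0$ at $0^+$ up to $h(u^*)$ and back down to $L_1=1-L_0<\tfrac12$ at $1^-$ (Lemma \ref{lemita}(c)). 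I would then evaluate $h$ at three distinguished abscissae: since $t_n^{-1}(\tfrac12)=0$, $h(\tfrac12)=t_{n+1}\bigl(t_n^{-1}(v)\sqrt{(n+1)/(n(1-\rho^2))}\,\bigr)$; since $t_n^{-1}(v^*)=t_n^{-1}(v)/\rho$, the numerator of \eqref{hu} vanishes and $h(v^*)=t_{n+1}(0)=\tfrac12$; and a similar substitution at $u^*$ shows $h(u^*)>\tfrac12$. The bracketing $v<v^*<1$ is immediate because $0<\rho<1$ and $t_n^{-1}(v)>0$.

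For part \textbf{(a)}, on the closed interval $[\tfrac12,v^*]\subset(u^*,1)$ the function $h$ is continuous and strictly decreasing with $h(v^*)=\tfrac12<v$, so a solution of $h(u)=v$ in $(\tfrac12,v^*)$ exists, and is then unique, iff $h(\tfrac12)>v$. To prove this last inequality I would invoke Lemma \ref{disp}(b): for $v\ge\tfrac12$ it gives $t_{n+1}^{-1}(v)\le t_n^{-1}(v)$, i.e.\ $t_{n+1}\bigl(t_n^{-1}(v)\bigr)\ge v$; since $\sqrt{(n+1)/(n(1-\rho^2))}>1$ and $t_n^{-1}(v)>0$, monotonicity of $t_{n+1}$ upgrades this to $h(\tfrac12)>t_{n+1}\bigl(t_n^{-1}(v)\bigr)\ge v$. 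The intermediate value theorem together with strict monotonicity then yields a unique $u_{v1}\in(\tfrac12,v^*)$ with $h(u_{v1})=v$. Moreover $h>v$ on $(u^*,\tfrac12]$ (as $h$ decreases there and $h(\tfrac12)>v$) and $h<\tfrac12<v$ on $(v^*,1)$, so $u_{v1}$ is the only element of $A(v)$ in $(u^*,1)$.

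For part \textbf{(b)}, any further solution of $h(u)=v$ must lie in $(0,u^*)$, where $h$ increases strictly from the limit $L_0$ up to $h(u^*)$. Because $u^*<\tfrac12$ and $h$ decreases on $(u^*,1)$, we have $h(u^*)>h(\tfrac12)>v$, so the range of $h$ on $(0,u^*)$ is $\bigl(L_0,h(u^*)\bigr)$ with right endpoint exceeding $v$; hence $h(u)=v$ has a solution in $(0,u^*)$ iff $v>L_0$, and in that case the solution $u_{v2}$ is unique and lies in $(0,u^*)$. Combined with (a), $A(v)$ has exactly one element when $v\le L_0$ and exactly two when $v>L_0$. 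Part \textbf{(c)} is then bookkeeping: by Theorem \ref{basic}(a), $\text{CoVaR}_{v,u}[Y\mid X]<\text{VaR}_v[Y]$ iff $h(u)>v$, and the sign chart of $h-v$ assembled above reads: if $v>L_0$, then $h<v$ on $(0,u_{v2})\cup(u_{v1},1)$ and $h>v$ on $(u_{v2},u_{v1})$; if $v\le L_0$, then $h>v$ on $(0,u_{v1})$ and $h<v$ on $(u_{v1},1)$. This is exactly the assertion with $a=u_{v2}$ in the first case and $a=0$ in the second.

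The main obstacle is the shape analysis underpinning the existence claims, and concretely the strict inequality $h(\tfrac12)>v$. It drives both the intermediate-value step on $(\tfrac12,v^*)$ and the fact that $h>v$ on $(u^*,\tfrac12]$, so that the number of PELCoV$_v$'s is ultimately controlled only by the behaviour of $h$ near $0$, i.e.\ by whether $v>L_0$. Everything else reduces to monotonicity (from Lemma \ref{lemita}) plus the intermediate value theorem, but this one inequality genuinely requires both the dispersive-type comparison between $t_n$ and $t_{n+1}$ from Lemma \ref{disp} and the rescaling factor $\sqrt{(n+1)/(n(1-\rho^2))}>1$ to push in the same direction; checking that they do is the one step that is not purely formal.
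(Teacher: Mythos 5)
Your proposal is correct and follows essentially the same route as the paper: reduce everything to the unimodal shape of $h(u)=\partial_1 C(u,v)$ with its unique maximum at $u^*<\tfrac12$, establish the key inequality $h(\tfrac12)>v$ via Lemma \ref{disp}(b) together with the scaling factor $\sqrt{(n+1)/(n(1-\rho^2))}>1$, use $h(v^*)=\tfrac12$ and the intermediate value theorem for existence and uniqueness, and finish with the sign chart of $h-v$ and Theorem \ref{basic}(a). If anything, your treatment of the ``only if'' direction in part (b) (comparing $v$ with the limit $L_0$ along the strictly increasing branch on $(0,u^*)$) is stated more cleanly than in the paper, but the underlying argument is identical.
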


\begin{proof}
      Let \( v \in \left(\frac{1}{2},1\right) \)  and define \( v^* = t_n \left( \frac{1}{\rho} t_n^{-1}(v) \right) > v \). To prove part (a), we begin by recalling from Lemma 4(b) that the function $h$ has a unique local maximum at $u^* < \frac{1}{2}$.
      Consequently, we can ensure that
      \( h(u) \) is strictly decreasing for $u$ in \( \left( \frac{1}{2}, v^* \right) \).
From equation \eqref{hu} we obtain  
\begin{eqnarray}
h\left(\frac{1}{2}\right)&=&t_{n+1}\left(\dfrac{t_{n}^{-1}\left(v\right)}{\sqrt{\dfrac{n(1-\rho^2)}{n+1}}}\right) \nonumber \\
&=& t_{n+1}(at_{n}^{-1}\left(v\right))  \nonumber \\
&\overset{(i)}>& t_{n+1}(t_{n}^{-1}\left(v\right)) \notag \\
&\overset{(ii)}\ge&v.  \label{uve}
\end{eqnarray}
Here, (i) follows from the fact that $a={\sqrt{\frac{n+1}{n(1-\rho^2)}}}>1,$ which increases the argument of  $t_{n+1}$ and thereby its value. Inequality (ii) follows from Lemma \ref{disp}(b).
Additionally, equation \eqref{hu} shows that $h(v^*)=\frac{1}{2}.$ Since \( h(u) \) is a continuous and strictly decreasing function in \( \left( \frac{1}{2}, v^* \right) \), with $h(\frac{1}{2})\ge v$ and $h(v^*)<v,$  it follows that the equation $h(u)=v$ has a unique solution,  $u_{v1},$ in the interval $(\frac{1}{2},v^*).$ In other words: the set \( A(v) \) necessarily contains a unique PELCoV\(_v\) within the interval \( \left(\frac{1}{2}, u^*\right) \), as stated in part (a).

To prove part (b), observe that the function $h$ is decreasing over the entire interval   \( (v^*,1) \). Given that $h(v^*)=\frac{1}{2}<v $ it follows that no additional solution can exist within the interval $ (v^*,1)$, meaning that if another solution exists, it must lie in the interval \( (0, \frac{1}{2}) \). 

We now proceed with the `if' part of the existence characterization. Suppose that \( v > L_0 \). Since \( v > \frac{1}{2} \), Lemma \ref{lemita} (b) ensures that \( u^* < \frac{1}{2} \) is a maximum of the function \( h \), which implies \( h(u^*) \geq h(u_{v1}) = v \). By the continuity of \( h \), we conclude that there must be a solution in the interval \( (0, u^*) \).  For the `only if' part, assume that a second PELCoV\(_v\), denoted by \( u_{v2} \), exists within the interval \( (0, \frac{1}{2}) \). By definition, this means \( h(u_{v2}) = v \). Recall that \( u^* \) is the unique critical point of \( h \) and corresponds to a maximum. Consequently, \( h \) is strictly increasing on \( (0, u^*) \) and strictly decreasing on \( (u^*,1) \). Furthermore, from \eqref{uve}, we know that \( h\left(\frac{1}{2}\right) > v \), which necessarily implies \( v > L_0 \), completing the proof of part (b). 

To prove part (c), note that \( u^* < \frac{1}{2} < u_{v1} \). If \( u_{v2} \) exists, then necessarily \( u_{v2} \in (0, u^*) \). Since \( h \) is a continuous function, strictly increasing on the interval \( (a, u^*) \) and strictly decreasing on the interval \( (u^*, u_{v1}) \), it is easy to analyze the sign of the function \( h(u) - v \) for all \( u \in (0,1) \). The sign analysis and the application of Theorem \ref{basic}(a) prove the result.
\end{proof}

\begin{remark}
Given \( v \in \left(\frac{1}{2}, 1\right) \), Theorem \ref{main_theorem} states that there is at least one PELCoV\(_v\) and at most two. The existence of a second PELCoV depends on the relative position of \( v \) with respect to \( L_0 \); in particular, a sufficiently high value of \( \rho \) guarantees that a second PELCoV\(_v\) does not exist.  However, in this case, the only remaining PELCoV\(_v\),  may become ineffective as an alarm signal, as it could remain too close to \( v \). For moderate values of \( \rho \), the presence of a second PELCoV\(_v\) in the left tail results from the extreme volatility in the tails of the Student-\( t \) distribution.

\end{remark}

The following theorem refers to the set \( A(v) \) for \( v < \frac{1}{2} \). Since the proof is similar to that of Theorem \ref{main_theorem}, it is omitted. 

\begin{theorem} 
Let \((X, Y)\) be a random vector with a Student-\(t\) copula characterized by parameters \(\rho > 0\) and \(n > 1\), and let \(v \in \left(0,\frac{1}{2} \right)\). Then:
\begin{itemize}
    \item[(a)] The set \( A(v) \) necessarily contains a unique PELCoV\(_v\), denoted as \( u_{v1} \), within the interval \( \left(v^*,\frac{1}{2}\right) \), where \( v > v^* = t_n\left(\frac{t_n^{-1}(v)}{\rho}\right)\).
    \item[(b)] Additionally, a second PELCoV, denoted as \( u_{v2} \), exists if and only if \( v < L_1 \). If \( u_{v2} \) exists, it lies within the interval \( ( u^{*},1) \), where \( u^* \in \left( \frac{1}{2},1\right) \) is defined by \eqref{u*}.
    \item[(c)] CoVaR\(_{u,v}[Y|X]\) is strictly less than VaR\(_v[Y]\) if and only if \( u \in ( u_{v1},b) \), where \( b = u_{v2} \) if the second PELCoV\(_v\) exists, and \( b = 1 \) otherwise.
\end{itemize}
\end{theorem}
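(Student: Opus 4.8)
The plan is to transcribe the proof of Theorem~\ref{main_theorem} with all inequalities reversed, since for $v<\tfrac12$ Lemma~\ref{lemita}(b) places the unique critical point $u^{*}=t_{n}\!\left(-\rho n/t_{n}^{-1}(v)\right)$ in $\left(\tfrac12,1\right)$ and makes it a \emph{minimum} of $h(u)=\partial_{1}C(u,v)$. Thus $h$ is strictly decreasing on $(0,u^{*})$ and strictly increasing on $(u^{*},1)$, with $h(0^{+})=L_{0}>\tfrac12$ and $h(1^{-})=L_{1}=1-L_{0}<\tfrac12$ by Lemma~\ref{lemita}(c) (using $\rho>0$) --- a ``U-shaped'' profile, the mirror image of the single-hump profile used in Theorem~\ref{main_theorem}. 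A shorter route, which I would at least record, exploits the radial symmetry of the Student-$t$ copula: from \eqref{hu}, together with $t_{n}^{-1}(1-x)=-t_{n}^{-1}(x)$ and $t_{n+1}(-x)=1-t_{n+1}(x)$, one gets $\partial_{1}C(u,v)=1-\partial_{1}C(1-u,1-v)$, so that $u$ is a PELCoV$_{v}$ iff $1-u$ is a PELCoV$_{1-v}$; since $1-v>\tfrac12$, the whole statement then drops out of Theorem~\ref{main_theorem} once one checks that $v^{*}$, $u^{*}$, $L_{0}$, $L_{1}$ transform correctly under $v\mapsto1-v$.

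For part~(a): on $(v^{*},\tfrac12)\subset(0,u^{*})$ the function $h$ is continuous and strictly decreasing, so it is enough to sign its endpoints. Substituting $u=v^{*}=t_{n}\!\left(t_{n}^{-1}(v)/\rho\right)$ into \eqref{hu} annihilates the numerator, giving $h(v^{*})=t_{n+1}(0)=\tfrac12>v$; substituting $u=\tfrac12$ gives $h\!\left(\tfrac12\right)=t_{n+1}\!\left(\sqrt{\tfrac{n+1}{n(1-\rho^{2})}}\,t_{n}^{-1}(v)\right)$, and since $t_{n}^{-1}(v)<0$ while $\sqrt{(n+1)/(n(1-\rho^{2}))}>1$, this is $<t_{n+1}\!\left(t_{n}^{-1}(v)\right)\le t_{n+1}\!\left(t_{n+1}^{-1}(v)\right)=v$, the last step being Lemma~\ref{disp}(a). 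The intermediate value theorem and strict monotonicity then yield a unique root $u_{v1}\in(v^{*},\tfrac12)$; moreover $h>\tfrac12>v$ on $(0,v^{*})$ (its values lie between $h(v^{*})=\tfrac12$ and $L_{0}$) and $h<v$ on $(\tfrac12,u^{*})$, so this $u_{v1}$ is in fact the only solution of $h=v$ in $(0,u^{*})$.

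For parts~(b) and~(c): on $(u^{*},1)$ the function $h$ increases strictly from $h(u^{*})<h\!\left(\tfrac12\right)<v$ towards the limit $L_{1}$, so a second root $u_{v2}$ exists there --- necessarily unique --- if and only if $L_{1}>v$; since $u^{*}$ is not itself a root and $(0,u^{*})$ already accounts for $u_{v1}$, any further PELCoV$_{v}$ must lie in $(u^{*},1)$, which establishes both the equivalence ``$u_{v2}$ exists $\iff v<L_{1}$'' and its location in $(u^{*},1)$ with $u^{*}\in\left(\tfrac12,1\right)$. For part~(c), I would then read off the sign of $h(u)-v$ on the successive pieces $(0,u_{v1})$, $(u_{v1},u^{*})$, $(u^{*},u_{v2})$, $(u_{v2},1)$ (merging the two middle pieces when $u_{v2}$ is absent): $h-v$ is positive on the two outer pieces and negative on the middle piece surrounding the minimum $u^{*}$. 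Feeding this into Theorem~\ref{basic}(a) fixes, on each piece, which of CoVaR$_{u,v}[Y\mid X]$ and VaR$_{v}[Y]$ is the larger --- CoVaR$_{u,v}[Y\mid X]$ and VaR$_{v}[Y]$ coincide only at $u_{v1}$ (and at $u_{v2}$, if it exists) --- and this is the comparison recorded in part~(c), which is the exact mirror image of Theorem~\ref{main_theorem}(c).

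The step I expect to be the real obstacle is the endpoint bound $h\!\left(\tfrac12\right)<v$: it hinges on two reversals acting together --- Lemma~\ref{disp}(a), which below the median flips the usual dispersion ordering into $t_{n}^{-1}(v)\le t_{n+1}^{-1}(v)$, and the scale factor $\sqrt{(n+1)/(n(1-\rho^{2}))}>1$, which \emph{lowers} $t_{n+1}(\cdot)$ here precisely because the argument $t_{n}^{-1}(v)$ is negative --- so one must keep both sign changes straight. The only other delicate point is invoking Lemma~\ref{lemita}(b) to be certain $h$ has just one critical point, which is what rules out spurious additional PELCoV$_{v}$'s and makes the case division in part~(c) exhaustive.
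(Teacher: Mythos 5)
The paper itself omits the proof of this theorem (``the proof is similar to that of Theorem~\ref{main_theorem}, it is omitted''), so there is no written argument to compare against; your mirrored reconstruction, together with the symmetry shortcut \(\partial_1C(u,v)=1-\partial_1C(1-u,1-v)\), is exactly the intended route. Parts (a) and (b) are handled correctly: the endpoint evaluations \(h(v^*)=t_{n+1}(0)=\tfrac12>v\) and \(h(\tfrac12)<v\) (the latter using both the sign of \(t_n^{-1}(v)\) against the scale factor and Lemma~\ref{disp}(a)), the monotone structure coming from the unique minimum at \(u^*>\tfrac12\) (Lemma~\ref{lemita}(b)), and the characterization of the second root by \(v<L_1\) are all sound.

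The gap is in the final inference of part (c). Your sign analysis is right: \(h-v>0\) on the two outer pieces \((0,u_{v1})\) and \((u_{v2},1)\), and \(h-v<0\) on the middle piece \((u_{v1},b)\) surrounding the minimum \(u^*\). But Theorem~\ref{basic}(a) states that \(\text{CoVaR}_{v,u}[Y\mid X]\ge \text{VaR}_v[Y]\) if and only if \(\partial_1C(u,v)\le v\); hence \(h<v\) on \((u_{v1},b)\) yields CoVaR strictly \emph{greater} than VaR there, and CoVaR is strictly \emph{less} than VaR precisely on the outer set \((0,u_{v1})\cup(b,1)\). This is the opposite of what part (c) asserts, so your closing claim that the sign pattern ``is the comparison recorded in part (c)'' does not hold. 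The contrast with Theorem~\ref{main_theorem} is the point you glossed over: there the middle interval \((a,u_{v1})\) surrounds a \emph{maximum} of \(h\), so \(h>v\) on it and CoVaR\(<\)VaR follows; here the middle interval surrounds a minimum, so the conclusion flips. A careful application of Theorem~\ref{basic}(a) therefore shows that part (c) as printed has the inequality reversed (it should read ``strictly greater,'' or equivalently the set should be the complement), and a blind proof of the statement as written cannot succeed; you needed to either flag this discrepancy or carry the sign through \ref{basic}(a) explicitly rather than asserting agreement.
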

To conclude this section, the following result demonstrates how to analytically determine the  PELCo$_v$s in Student t-copulas (see the appendix for details). 

\begin{lemma}  \label{ttkm}
Each \( u_v \in A(v) \)  satisfies the following equation:\begin{equation*}\label{PELCoV_solution}
u_v=t_{n}\left(\frac{t_{n}^{-1}(v)\rho(n+1)\pm \sqrt{k}}{\left(\rho^2(n+1)-\left(t_{n+1}^{-1}(v)\right)^2(1-\rho^2)\right)}\right)
\end{equation*}
where $$ k=\left((1-\rho^2)\left(t_{n+1}^{-1}(v)\right)^2\right)
\left( \rho^2(n+1)n^2+\left(t_{n}^{-1}(v)\right)^2(1+n\rho^2)\right)\ge 0.$$
\end{lemma}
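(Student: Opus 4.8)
The plan is to start from the characterization in Remark~\ref{nota}, namely that each $u_v \in A(v)$ solves $h(u_v) = v$ where $h$ is given by the explicit formula \eqref{hu}. Writing $w = t_n^{-1}(u_v)$ and applying $t_{n+1}^{-1}$ to both sides of $h(u_v)=v$, I would first reduce the problem to the algebraic equation
\[
\frac{t_n^{-1}(v) - \rho w}{\sqrt{\dfrac{(n + w^2)(1-\rho^2)}{n+1}}} = t_{n+1}^{-1}(v).
\]
The entire lemma is then just a matter of clearing the square root and solving a quadratic in $w$; recovering $u_v = t_n(w)$ at the end gives the stated formula.

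The key steps, in order: (i) abbreviate $p := t_n^{-1}(v)$ and $q := t_{n+1}^{-1}(v)$, so the displayed equation reads $(p - \rho w)\sqrt{n+1} = q\sqrt{(n+w^2)(1-\rho^2)}$; (ii) square both sides to obtain $(p-\rho w)^2 (n+1) = q^2 (n + w^2)(1-\rho^2)$; (iii) expand and collect terms in $w$, producing a quadratic $\bigl(\rho^2(n+1) - q^2(1-\rho^2)\bigr) w^2 - 2 p \rho (n+1)\, w + \bigl(p^2(n+1) - q^2 n (1-\rho^2)\bigr) = 0$; (iv) apply the quadratic formula, so $w = \dfrac{p\rho(n+1) \pm \sqrt{k}}{\rho^2(n+1) - q^2(1-\rho^2)}$ with discriminant $k = p^2\rho^2(n+1)^2 - \bigl(\rho^2(n+1) - q^2(1-\rho^2)\bigr)\bigl(p^2(n+1) - q^2 n(1-\rho^2)\bigr)$; (v) simplify $k$ algebraically — the $p^2\rho^2(n+1)^2$ terms cancel, and after factoring one is left with $k = q^2(1-\rho^2)\bigl(\rho^2(n+1)n^2 + p^2(1 + n\rho^2)\bigr)$, which matches the statement once $p,q$ are written back in terms of $t_n^{-1}(v)$ and $t_{n+1}^{-1}(v)$; (vi) substitute $w = t_n^{-1}(u_v)$ and apply $t_n$ to both sides to get the claimed closed form for $u_v$.

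The one genuinely non-routine point is the simplification of the discriminant in step (v): one must check that the cross terms combine so that $k$ factors cleanly as a product with the manifestly nonnegative factor $q^2(1-\rho^2)$ and a second factor $\rho^2(n+1)n^2 + p^2(1+n\rho^2)$ that is a sum of nonnegative terms, thereby establishing $k \ge 0$ and hence that the formula is real-valued. I would also remark that squaring in step (ii) is not fully reversible, so the $\pm$ formula produces the candidate set containing $A(v)$, while Theorem~\ref{main_theorem} (and its $v<\tfrac12$ counterpart) already pins down exactly which of the two roots is realized in each subinterval; thus the lemma is a computational companion to those theorems rather than an independent existence statement. The degenerate case $\rho^2(n+1) = q^2(1-\rho^2)$, where the quadratic collapses to a linear equation, can be noted in passing but does not affect the generic statement.
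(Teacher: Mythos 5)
Your steps (i)--(iv) and (vi) reproduce exactly the paper's argument in Appendix B: substitute $a=t_n^{-1}(u_v)$, $b=t_n^{-1}(v)$, $c=t_{n+1}^{-1}(v)$, apply $t_{n+1}^{-1}$, square, and solve the quadratic $\bigl(\rho^2(n+1)-c^2(1-\rho^2)\bigr)a^2-2ab\rho(n+1)+b^2(n+1)-nc^2(1-\rho^2)=0$. The gap is precisely at step (v), the one step you yourself flag as non-routine and then assert without computation. Carrying out the cancellation gives
\[
k \;=\; b^2\rho^2(n+1)^2-AC \;=\; c^2(1-\rho^2)\Bigl[(n+1)\bigl(n\rho^2+b^2\bigr)-n\,c^2(1-\rho^2)\Bigr],
\]
where $A$ and $C$ denote the leading and constant coefficients of the quadratic. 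This is \emph{not} the expression in the statement, namely $c^2(1-\rho^2)\bigl[\rho^2(n+1)n^2+b^2(1+n\rho^2)\bigr]$; the two coincide only if $c^2-b^2=\rho^2(n+1)(1-n)/(1-\rho^2)$, which is not an identity between $t_n^{-1}(v)$ and $t_{n+1}^{-1}(v)$ (it even depends on $\rho$). A numerical check with $n=2$, $\rho=0.5$, $v=0.75$ gives $k\approx 1.151$ for the true discriminant versus $\approx 1.755$ for the stated formula, and only the former reproduces a genuine root of $h(u)=v$. Note that the paper's own proof never performs this factorization: Appendix B leaves $k$ in the raw discriminant form $b^2\rho^2(n+1)^2-\bigl(\rho^2(n+1)-c^2(1-\rho^2)\bigr)\bigl(b^2(n+1)-nc^2(1-\rho^2)\bigr)$, which is correct but inconsistent with the closed form printed in the lemma. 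The error therefore sits in the statement itself, and your proposal endorses it rather than detecting it; a complete proof must either correct the factored expression or retain the unfactored discriminant.

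Two smaller points. Nonnegativity of the correctly factored $k$ is not ``manifest'' as you claim: the bracket $(n+1)(n\rho^2+b^2)-nc^2(1-\rho^2)$ is nonnegative because $c^2\le b^2$ by Lemma \ref{disp}, whence it is bounded below by $n\rho^2(n+1)+c^2(1+n\rho^2)\ge 0$; alternatively $k\ge 0$ follows for free because $A(v)\neq\emptyset$ guarantees the quadratic has a real root. On the positive side, your remarks that squaring is not reversible---so the formula only delimits a candidate set that must be matched against the actual solutions of $h(u)=v$, with Theorem \ref{main_theorem} locating the valid roots---and that the case $\rho^2(n+1)=c^2(1-\rho^2)$ degenerates to a linear equation are both correct and go beyond what the paper records.
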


\section{Application to Exchange Rate Risk Monitoring}  

Understanding the relationship between different exchange rates is crucial for analyzing the dynamics of the foreign exchange market and its economic implications. In particular, the link between the USD/EUR and USD/GBP exchange rates is of significant interest due to the central role these currencies play in international trade, investment, and monetary policy.  The euro (EUR) and the British pound (GBP) are among the most traded currencies globally, and their interaction with the US dollar (USD) is fundamental to financial market stability. However, structural differences between these currencies can influence their respective exchange rates against the USD. While the EUR is generally considered more stable and less volatile due to its backing by the Eurozone, the GBP has exhibited greater sensitivity to political and economic events, such as Brexit.

The previous analysis suggests a risk-monitoring strategy based on  PELCoV$_v$ methodology to assess exposure to the USD/GBP exchange rate ($Y_t$). This copula based approach integrates the unconditional VaR for USD/GBP with the CoVaR of USD/GBP given the USD/EUR exchange rate ($X_t$), which serves as an auxiliary and control variable.  
Copula-based approaches have been widely used to study exchange rate interdependencies, as demonstrated in the works of  \citet{aas2009pair},   \citet{loaiza2015exchange}, \citet{albulescu2018monetary}, \citet{liu2020forecasting} and  \citet{gong2022asymmetric}, among others. Several studies suggest that Student’s \textit{t} copula provides a better fit than alternatives such as Gaussian and Gumbel copulas to capture exchange rate dependencies, mainly due to its ability to model tail dependence and extreme co-movements (see  \citet{chen2004simple},  \citet{diks2010out} and   \citet{du2017copula}).

The analysis examines two key exchange rate time series: the US dollar to euro spot exchange rate (USD/EUR), which represents the price of one euro in terms of US dollars in the spot market, and the US dollar to UK pound sterling spot exchange rate (USD/GBP), which reflects the price of one UK pound sterling in terms of US dollars in the spot market\footnote{In the Federal Reserve Economic Data (FRED) database, the code for the U.S. dollar to euro spot exchange rate is EXUSEU, and the code for the U.S. dollar to U.K. pound sterling spot exchange rate is EXUSUK.}. The dataset, obtained from the Federal Reserve Economic Data (FRED) of the Federal Reserve Bank of St. Louis, covers the period from January 1, 1999, to April 1, 2024. It consists of monthly observations, where each monthly value represents the average of the available daily data. These series  are not seasonally adjusted. This characteristic makes them well-suited for analyzing long-term trends, volatility, and potential structural changes in the foreign exchange market.

Covering a 25-year period, the data set captures key economic events that have shaped currency fluctuations, including the introduction of the Euro, the 2008 financial crisis, Brexit, and the COVID-19 pandemic. To conduct the analysis, negative log returns were computed as
\begin{equation*}
    r_t = \log\left(\frac{p_{t-1}}{p_{t}} \right),
\end{equation*}
where \( p_t \) and \( p_{t-1} \) represent the exchange rate at month \( t \) and \( t-1 \), respectively.

\subsection{The model}
Empirical research on multivariate time series has shown that asset returns often exhibit time-varying dependence (see \citet{patton2001modelling}). To capture this dynamic behavior, we adopt the copula time series model proposed by \citet{patton2006modelling}, which he applied to analyze the dependence between the Deutsche Mark and Japanese Yen exchange rates. This approach preserves a fixed copula functional form throughout the sample while allowing its parameters to evolve according to a specified equation. Estimating these parameters requires an initial step of defining the marginal distributions for asset log returns. The model has been widely used in studies of asset return comovements under time-varying dependence frameworks (see \citet{patton2006estimation},  \citet{reboredo2011crude,reboredo2013gold}, and \citet{ji2019information}). Alternative approaches to modeling dynamic dependence have also been proposed, including semiparametric methods for conditional copula estimation  (\citet{acar2011conditional}; \citet{Abegaz2012}) and frameworks based on non-stationary random vectors, where both the marginal distributions and the copula may evolve over time (\citet{Nasri2019}).

We define the negative log-returns of the US dollar to UK pound sterling spot exchange rate as  \(X_t\) and the US dollar to euro spot exchange rate as \(Y_t\). Numerous studies have shown that exchange rate returns, typically measured as log changes in the spot exchange rate, exhibit autocorrelation, volatility clustering, and conditional heteroskedasticity (see  \citet{mcguirk1993modeling}  and references therein). To account for these properties, we model their marginal conditional distributions using ARMA-GARCH specifications.  The analyses presented in Section \ref{data} support the proposed models for the two marginal series. The marginal distribution of $X_t$ is specified as an AR(1)+GARCH(1,1) model with iid Student \textit{t} innovations:
\begin{eqnarray*}
X_t &=&  \phi_1 X_{t-1} + \varepsilon_t, \quad \varepsilon_t = \sigma_{x,t} a_t, \quad a_t \sim \ t_{m_1} \\
 \sigma_{x,t}^2&=&\omega_x+\alpha_x\varepsilon_{t-1}^2+\beta_x\sigma_{x,t-1}^2
\end{eqnarray*}
Similarly, the marginal distribution of \( Y_t \) is modeled as an MA(1) + GARCH(1,1)  process with iid skew Student-\textit{t} distributed innovations, given by:
\begin{eqnarray*}
	Y_t&=& \theta_1 \eta_{t-1} +\eta_t  \quad \eta_t=\sigma_{y,t}b_t, \quad  b_t\sim   \ t^*_{m_2, \xi} \\ \quad \sigma_{y,t}^2 &=& \omega_{y} + \alpha_{y} \eta_{t-1}^2 + \beta_{y} \sigma_{y,t-1}^2 
\end{eqnarray*}
where \( t^*_{m_2,\xi} \) denotes a standardized skew Student-\textit{t} distribution with skewness parameter \( \xi \) and degrees of freedom \( m_2 \). Its density function is given by:
\[
g_{m_2,\xi}\left(x \right) =
\begin{cases}
\frac{2}{\xi + \frac{1}{\xi}} \, f_{m_2}\left(\xi x \right), & \text{if } x < 0, \\
\frac{2}{\xi + \frac{1}{\xi}} \, f_{m_2}\left(\frac{x}{\xi}\right), & \text{if } x \geq 0,
\end{cases}
\]
where $f_{m_2}$ is the density function of a Student-\textit{t} distribution with $m_2$ degrees of freedom and  \( \xi \) is the skewness parameter.

Denote by $\boldsymbol{H}_t(\cdot;\boldsymbol{\lambda}_c)$ the conditional distribution function of the bivariate time serie $\boldsymbol{X}_t=(X_t,Y_t),$ given the information set at time $t-1.$
By applying  Sklar's theorem to the joint conditional distribution function, we have	
\begin{equation}\label{jointdf}
	\boldsymbol{H}_t(x, y ; \boldsymbol{\lambda})=C_t\left(F_{t}\left(x ; \lambda_1 \right), G_{t}\left(y ; \lambda_2 \right) ; \lambda_c\right), 
\end{equation}
where $\lambda_1$ and $\lambda_2$ are the parameters for the marginal conditional distributions, $\lambda_c$ are the parameters for the conditional copula and $\boldsymbol{\lambda}=(\lambda_1,\lambda_2,\lambda_c)$ are the parameters for the joint conditional distribution. Although our model incorporates time-varying copulas, we initially considered a static (or time-invariant) Student-\textit{t} copula. This model was fitted to the data using the \texttt{fitCopula} function from the \textit{copula} package in \textsf{R}, which implements a semi-parametric maximum pseudo-likelihood estimator based on pseudo-observations. The estimation yielded a copula parameter of $n = 9.7595$ degrees of freedom (we assume that the degrees of freedom parameter is constant and that only the correlation parameter is time-varying). The goodness-of-fit test for bivariate copulas, based on White's information matrix equality (\citet{white1982maximum}), yielded a $p$-value of 0.88 for the Student-\textit{t} copula. This result suggests that the Student-\textit{t} copula is a suitable choice, as there is insufficient evidence to reject the model at conventional significance levels under the assumption of time-invariant dependence.

Following the work of  \citet{patton2006estimation}, we model the  dependence parameter \(\rho_t\) for the Student-\textit{t} copula using an ARMA(1,10)-type process:  
\begin{equation}\label{rot}
    \rho_t = \Lambda_1\left(\nu_0 + \nu_1 \rho_{t-1} + \nu_2 \frac{1}{10} \sum_{j=1}^{10}  t_n^{-1}(u_{t-j}) \cdot t_n^{-1}(v_{t-j})\right)
\end{equation}
where \(t_n^{-1}\) is the inverse cumulative distribution function of the \(t\)-distribution with \(n\) degrees of freedom, and \(\Lambda_1(x) = \frac{1 - e^{-x}}{1 + e^{-x}}\) is a modified logistic transformation that ensures \(\rho_t\) remains within \((-1,1)\). The parameter vector \(\lambda_c = (\nu_0, \nu_1, \nu_2)\) governs the evolution of \(\rho_t\).  
This model accounts for the persistence of dependence by including \(\rho_{t-1}\) as a regressor and incorporates the intuition that correlation strengthens when the transformed marginals share the same sign and weakens when they have opposite signs.

\subsection{Estimation and Testing}

The set of copula parameters, \(\lambda_c\), is estimated using the maximum likelihood method. Given a random sample \((x_t, y_t)_{t=1}^{n_0}\), the log-likelihood function, following equation \eqref{jointdf}, is expressed as:
\begin{equation*}
    l(\boldsymbol{\lambda}) = \sum_{t=1}^{n_0} \left\{ \log f_t(x_t ; \lambda_1) + \log g_t(y_t ; \lambda_2) + \log c_t\left(F_t(x_t ; \lambda_1), G_t(y_t ; \lambda_2); \lambda_c\right) \right\},
\end{equation*}
where \( f_t \) and \( g_t \) denote the marginal conditional density functions, and \( c_t \) represents the conditional copula density function.

To estimate the parameters, we adopt a two-stage maximum likelihood procedure, as proposed by \citet{joe1996estimation}. In the first stage, we estimate the parameters of the marginal distributions independently. In the second stage, we estimate the dependence parameter by maximizing the copula likelihood, solving the following:
\begin{equation*}
    \hat{\lambda}_c = \underset{\lambda_c}{\arg\max} \sum_{t=1}^{n_0} \log c_t\left(\hat{u}_t, \hat{v}_t; \lambda_c \right),
\end{equation*}
where the pseudo-sample observations from the copula are given by \(\hat{u}_t = F_t(x_t ; \hat{\lambda}_1 )\) and \(\hat{v}_t = G_t(y_t ; \hat{\lambda}_2)\).  Under standard regularity conditions, this estimation procedure ensures the consistency and asymptotic normality of the estimates  (see \citet{joe1997multivariate})

\subsection{Data analysis}\label{data}
Table \ref{summary} presents the descriptive statistics for the log return series. The Shapiro-Wilk test strongly rejects the normality of the USD/GBP data, and both USD/GBP and USD/EUR exhibit positive excess kurtosis. The empirical correlation coefficient between the two series is 0.6822.

The parameters of the ARMA(p,q)-GARCH(r,s) models were empirically determined by selecting the optimal models from among the alternatives based on the Akaike Information Criterion (AIC). The Jarque-Bera test strongly rejected the normality of residuals for USD/GBP, but not for USD/EUR. To validate the volatility equation, we applied the Ljung-Box test to the squared standardized residuals. The sample autocorrelation function (ACF) and the p-values of the Kolmogorov-Smirnov and Ljung-Box tests indicate that the models are appropriate. Table \ref{marginals_table} presents the parameter estimates for the marginal models, while Figure \ref{scater} shows the scatterplot of the empirical copula, \( (F_{t}(x_t ; \hat{\lambda}_1), G_{t} (y_t ; \hat{\lambda}_2)) \).

\begin{table}[H]
	\begin{center}
		\begin{tabular}{| l | c | c |}	
			\hline		 & USD/EUR $(X_t)$&  USD/GBP $(Y_t$) \\
			 \hline
			 Mean & 0.000257 &0.000911 \\
			Std dev  & 0.022035 &  0.020911\\ 
			Max &0.077988	 &0.095447     \\
			Min &-0.061934 &-0.059854  \\
			Skewness& -0.000702&0.580742 \\
			Kurtosis & 0.468576&1.803208 \\
			Shapiro-Wilk $p$-val &0.2747& 4.473 $\times 10^{-5}$  \\

			Pearson's r & \multicolumn{2}{ |c|}{0.6822} \\
			number observed & \multicolumn{2}{ |c|}{303} \\
            
			\hline
		\end{tabular}
		\caption{Descriptive statistics for log returns. }
		\label{summary}
	\end{center}
\end{table}

	\begin{figure}[H]
	\centering
	\includegraphics[width=0.7\linewidth]{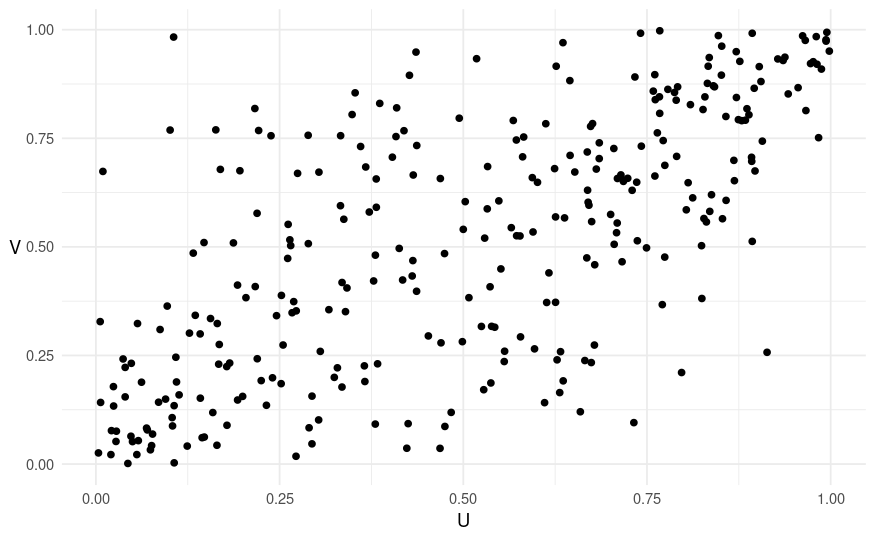}
	\caption{Scatter-plot of the (time-invariant) empirical copula of the bivariate time series $(X_t,Y_t),$ given by $(F_{t}(x_t ; \hat{\lambda}_1), G_{t} (y_t ; \hat{\lambda}_2)).$}
	\label{scater}
\end{figure}
\begin{table}[H]
	\begin{center}
\begin{tabular}{| l | c | l |c |}
			\hline
\multicolumn{4}{ |l| }{Mean equation} \\ 
			\hline
			\multicolumn{2}{ |l|} {USD/EUR $(X_t)$} &\multicolumn{2}{ |l|} {USD/GBP  $(Y_t)$} \\
\hline		
$\hat{\phi}_1$ & 0.0292 (0.0552) & $\hat{\theta}_1$& 0.2299 (0.0582) \\

\hline
\multicolumn{4}{ |l| }{Variance equation} \\ \hline
$\hat{\alpha}_{x}$	 &0.0675 (0.0369)& $\hat{\alpha}_y$ & 0.08609 (0.04883)\\
$\hat{\beta}_x$  & 0.9035 (0.0549)& $\hat{\beta}_y$& 0.7819 ( 0.1197)\\
Ljung-Box $R^2$ $p$-val&0.1143& & 0.7255 \\

\hline
\multicolumn{4}{ |l| }{Marginal models} \\
\hline
 $\hat{m}_1$& 10 	&  $\hat{m}_2$ & 10 \\
& &$\hat{\xi}$ & 1.298 \\ 
Jarque-Bera $p$-val& 0.1978& &	0.5 $\times 10^{-5}$	 \\
		Ljung-Box (RS) $p$-val & 0.6099	 & &0.9498\\
		\hline
\end{tabular}
		\caption{Maximum likelihood estimates with asymptotic standard errors in parentheses of the parameters of the marginal distribution models for EXUSEU and EXUSUK log returns. Ljung–Box test for the squared residuals $(R^2)$ and for standardized residuals (RS) are computed with 20 lags. Jarque-Bera tests the normality of residuals. }
		\label{marginals_table}
	\end{center}
\end{table}

\subsection{Results}
In this section, we analyze the bivariate time series \(\{(X_t, Y_t)\}\), where \(X_t\) denotes the negative log-returns of the USD/EUR spot exchange rate, and \(Y_t\) corresponds to those of the USD/GBP rate. Our objective is to monitor the risk exposure associated with the USD/GBP exchange rate (\(Y_t\)) by observing the auxiliary series USD/EUR (\(X_t\)) and analyzing the associated \(\text{PELCoV}_v\).

\subsection{Existence, uniqueness and calculation of \text{PELCoV}$_v$}

According to Theorem~\ref{main_theorem}, for any \( v > \frac{1}{2} \), the existence of a \text{PELCoV}$_v$ greater than \( \frac{1}{2} \) is guaranteed. However, the existence of a second \text{PELCoV}$_v$ depends on whether the condition \( v > L_0 \) is satisfied.

It is important to recall that, in our time-varying copula model, the degrees of freedom of the $t$-distribution are fixed, while the dependence parameter \( \rho \) evolves dynamically according to Equation~\ref{rot}. Since \( L_0 \) is a function of \( \rho \), the existence of a second \text{PELCoV}$_v$ at time \( t \) is governed by the inequality \( v > L_0^t \), where
\begin{equation}
  L_0^t = t_{n+1} \left( \frac{\rho_t \sqrt{n+1}}{\sqrt{1 - \rho_t^2}} \right),
\end{equation}
and \( \rho_t \) denotes the value of \( \rho \) at time \( t \).

Figure~\ref{L0g} illustrates the evolution of \( L_0^t \) as a function of \( \rho_t \) under a Student-\textit{t} distribution with 9.7595 degrees of freedom. In our dataset, the minimum value of \( L_0^t \) across all time points satisfies
\[
\min_t \{ L_0^t \} > 0.9948,
\]
which exceeds the conventional probability levels typically adopted for risk control.

In this study, we consider \( v = 0.99 \) and \( v = 0.95 \), both of which fall below this threshold. Consequently, only a single \text{PELCoV}$_v$ exists at each point in time. According to Lemma~\ref{ttkm}, given \( v \in (0,1) \), the \text{PELCoV}$_v$ at time \( t \) is the valid solution \( u_v \) to the equation
\begin{equation*}
u_v=t_{n}\left(\frac{t_{n}^{-1}(v)\rho_t(n+1)\pm \sqrt{k}}{\left(\rho_t^2(n+1)-\left(t_{n+1}^{-1}(v)\right)^2(1-\rho_t^2)\right)}\right)
\end{equation*}
where $$ k=\left((1-\rho_t^2)\left(t_{n+1}^{-1}(v)\right)^2\right)
\left( \rho_t^2(n+1)n^2+\left(t_{n}^{-1}(v)\right)^2(1+n\rho_t^2)\right)\ge 0.$$

\begin{figure}[h!]
    \centering
           \includegraphics[width=\linewidth]{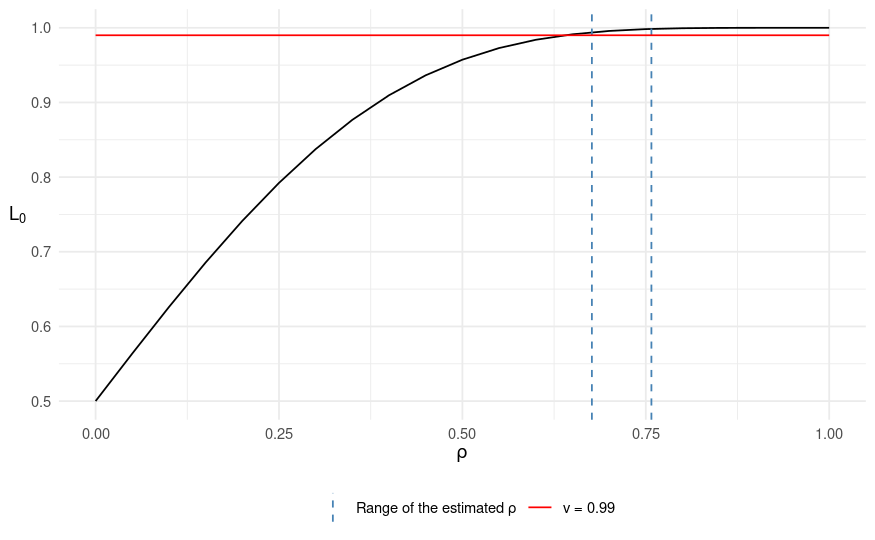}
        
        \caption{ Evolution of $L^t_0$ as a function of $\rho_t$. Within the range of $\rho_t$ values observed in our data (blue dashed lines), $L_0$ remains  above the threshold level $v = 0.99$ (red line) for all $t$. }\label{L0g}
   \end{figure}

\subsection{Interpretation of PELCoV$_v$}
The upper panel of Figure~\ref{pelcov99} displays the time series \( X_t \), along with the corresponding \( u_v(t) \)-quantiles, given by
\[
F_{X_t}^{-1}(u_v(t)),
\]
for \( v = 0.99 \), where \( u_v(t) \) denotes the \text{PELCoV}$_v$ at time \( t \). The lower panel presents the time series \( Y_t \), together with its Value at Risk at the 0.99 level, defined as
\[
\text{VaR}_{0.99}[Y_t] = G_{Y_t}^{-1}(0.99).
\]

\begin{figure}[H]
    \centering
    \begin{subfigure}[b]{\linewidth}
        \centering
        \includegraphics[width=\linewidth]{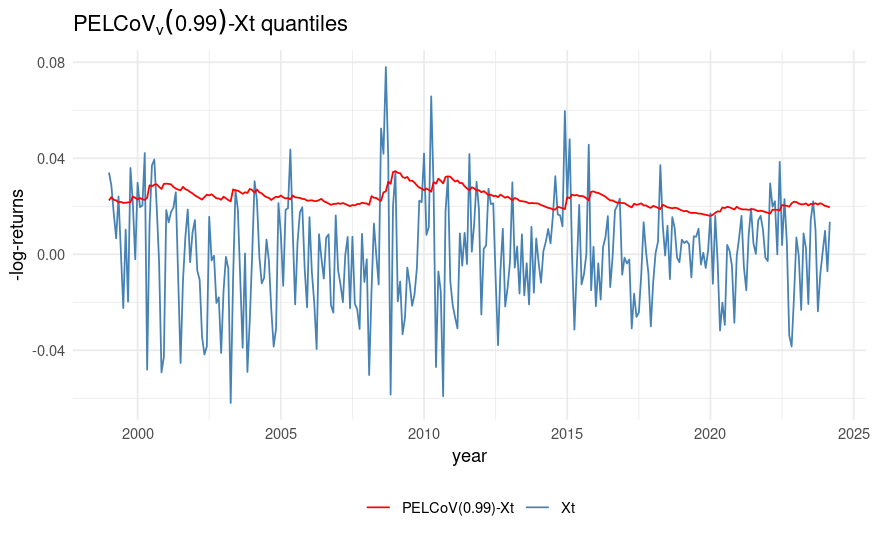}
    \end{subfigure}
    
    \vspace{0.3cm} 
    
    \begin{subfigure}[b]{\linewidth}
        \centering
        \includegraphics[width=\linewidth]{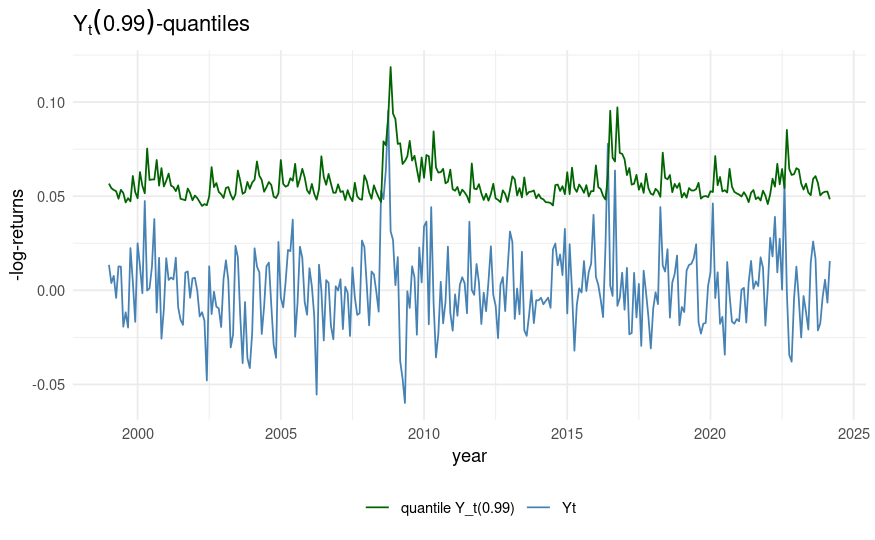}
    \end{subfigure}

    \caption{The upper panel shows the time series of the negative log-returns of the U.S. Dollar to Euro Spot Exchange Rate (EXUSEU), along with the \text{PELCoV}$_{0.99}$  under the assumption of a time-varying Student-\textit{t} copula. The lower panel displays the time series of the negative log-returns of the U.S. Dollar to British Pound (EXUSUK), together with the corresponding Value at Risk at the 0.99 level. }
    \label{pelcov99}
\end{figure}

\begin{figure}[H]
    \centering
    \begin{subfigure}[b]{\linewidth}
        \centering
        \includegraphics[width=\linewidth]{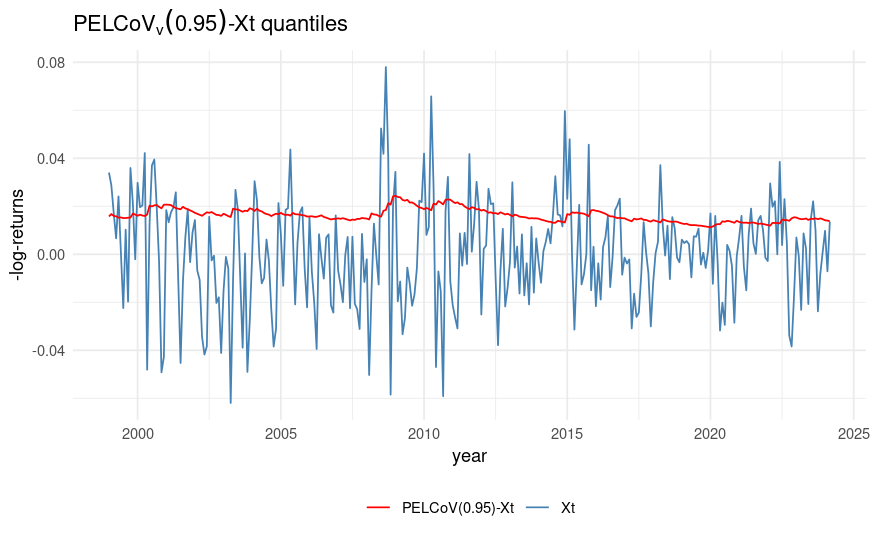}
    \end{subfigure}
    
    \vspace{0.3cm} 
    
    \begin{subfigure}[b]{\linewidth}
        \centering
        \includegraphics[width=\linewidth]{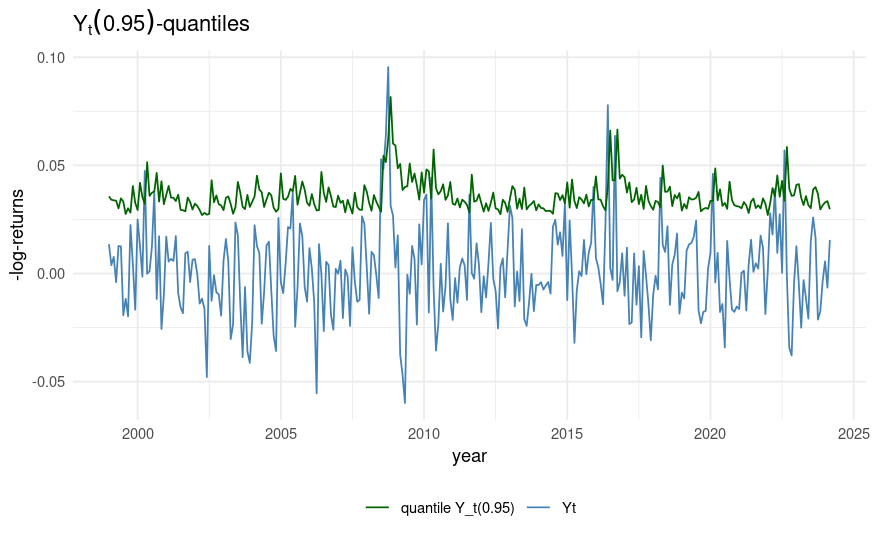}
    \end{subfigure}

\caption{The upper panel shows the time series of the negative log-returns of the U.S. Dollar to Euro Spot Exchange Rate (EXUSEU), along with the \text{PELCoV}$_{0.95}$  under the assumption of a time-varying Student-\textit{t} copula. The lower panel displays the time series of the negative log-returns of the U.S. Dollar to British Pound (EXUSUK), together with the corresponding Value at Risk at the 0.95 level.
}  
    \label{pelcov95}
\end{figure}

The upper panel illustrates, at each time point \( t \), the relationship between the observed values of \( X_t \) and the minimum loss threshold beyond which the risk associated with \( Y_t \) is underestimated when relying solely on \(\text{VaR}_v[Y_t]\) rather than the conditional risk measure \(\text{CoVaR}_{v,u}[Y_t \mid X_t]\).
Whenever the blue line (representing the values of \( X_t \)) exceeds the red line, which marks the values of \( X_t \) satisfying
\begin{equation}\label{eqvc}
  \text{VaR}_v[Y_t] = \text{CoVaR}_{v,u_v(t)}\left[Y_t \mid X_t\right],
\end{equation}
with \( v = 0.99 \), a prudent risk manager might consider replacing the Value at Risk with the Conditional Value at Risk as a more conservative measure. In fact, the model indicates that if \( X_t \) took any value \( F_{X_t}^{-1}(u') \) at time \( t \), with \( u' > u_v(t) \),  then it would follow that
\[
\text{VaR}_v[Y_t] < \text{CoVaR}_{v,u'}\left[Y_t \mid X_t\right],
\]
implying that \(\text{VaR}_v[Y_t]\) would underestimate the spillover risk originating from \( X_t \).

Figure~\ref{pelcov95} presents an analysis analogous to that of Figure~\ref{pelcov99} for a risk level of \( v = 0.95 \). Key insights from the analysis include the following:
\begin{enumerate}
    \item [(a)] A low-frequency event such as \(\text{VaR}_{0.99}[Y_t]\) offers limited utility as an early warning signal; by the time the series reaches this threshold, it is often too late to take preventive action against extreme risk. In contrast, the series \( X_t \) crosses its associated \( u_v(t) \) quantile more frequently, as \( u_v(t) \) fluctuates within the range \([0.678, 0.757]\). This makes \(\text{PELCoV}_v\) a more practical and informative tool for anticipating risk. More generally, leveraging a more stable and less volatile auxiliary variable than the one being monitored facilitates the early detection of extreme events at a manageable frequency.

   \item [(b)]  Observe that the upper panel of Figure~\ref{pelcov99} successfully anticipates several episodes of extreme risk during which the series \( Y_t \) reached its Value at Risk level \(\text{VaR}_{0.99}[Y_t]\). The most notable case occurs in 2022. Between February and June, the series \( X_t \) remained consistently above the \(\text{PELCoV}_{0.99}\) threshold, indicating that the Value at Risk was underestimating the risk associated with \( Y_t \). A risk manager who had adjusted their strategy in response to this early warning signal would not have been caught off guard by the extreme behavior exhibited by \( Y_t \) in the second quarter of 2022---particularly in June, when the series reached the level \(\text{VaR}_{0.99}[Y_t]\), as shown in the lower panel of Figure~\ref{pelcov99}. This episode coincided with heightened financial volatility driven by aggressive interest rate hikes by the Federal Reserve and mounting concerns over a potential global recession.

\item[(c)] Another noteworthy episode occurred in early 2000, during the period of heightened financial market volatility associated with the burst of the dot-com bubble. As shown in the upper panel of Figure~\ref{pelcov99}, between January and April the series \( X_t \) remained persistently above the \(\text{PELCoV}_{0.99}\) threshold, indicating a regime in which the Value-at-Risk measure was likely underestimating the risk associated with \( Y_t \). In April 2000, the series \( Y_t \) approached values close to its \(\text{VaR}_{0.99}[Y_t]\), validating the early warning signaled by the behavior of the auxiliary variable. This is most clearly illustrated by comparing the upper and lower panels of Figure~\ref{pelcov95}, where the \(\text{PELCoV}_{0.95}\) threshold effectively anticipates the extreme values reached by the series \( Y_t \), which even touched its \(\text{VaR}_{0.95}[Y_t]\).

A similar pattern emerged in late 2015. The series \( X_t \) remained above the \(\text{PELCoV}_{0.95}\) threshold for several months, conveying the signal that the \(\text{VaR}_{0.95}[Y_t]\) was underestimating risk relative to the CoVaR. Eventually, in December of that year, \( Y_t \) reached the level \(\text{VaR}_{0.95}[Y_t]\). This period coincided with heightened currency market volatility driven by the Federal Reserve’s first interest rate hike in nearly a decade, the European Central Bank’s expansion of its quantitative easing program, and lingering uncertainty following the Greek debt crisis.

\end{enumerate}

These episodes underscore the potential of \(\text{PELCoV}_v\) to detect early signs of risk escalation, even during periods of moderate volatility.

\section{Conclusions}
Building on the recent development of the probability-equivalent level of VaR and CoVaR (PELCoV) framework introduced by Ortega-Jiménez et al. (2024), this paper extends the methodology to accommodate bivariate risks modeled by a Student-\textit{t} copula, thereby relaxing the strong dependence assumptions of earlier work. Although the theoretical results are established in a static setting, we implement them dynamically to capture evolving dependence structures over time.

Our empirical application focuses on the foreign exchange market, monitoring the USD/GBP exchange rate with the USD/EUR series serving as an auxiliary variable. Covering the period from 1999 to 2024, the results show that \(\text{PELCoV}_v\) is a valuable tool for early warning detection of extreme risk episodes, even under moderate market volatility.

Notably, the methodology successfully anticipated key stress periods such as the dot-com bubble in early 2000, the financial turbulence in late 2015, and the volatility spike associated with monetary policy tightening in 2022. These findings underscore the practical utility of extending the PELCoV approach to more flexible dependence structures, enhancing its relevance for risk monitoring in financial markets.

\section{Acknowledgement}

MAS and ASL acknowledge financial support from the Spanish Ministry of Economy and Competitiveness through grant PID2020-116216GB-I00. DIFS acknowledges the University of Cádiz for the PhD scholarship linked to the same project (PID2020-116216GB-I00), funded by the Spanish Ministry of Economy and Competitiveness.

\section*{Appendix A}
To analyze the critical points of $h(u),$ we examine its derivative
\[\frac{\partial}{\partial u}\left(\frac{t_n^{-1}(v)-\rho t_n^{-1}(u)}{\sqrt{n+(t_n^{-1}(u))^2}} \right).\]
Expanding this expression leads to:
\[
\frac{\left(\frac{-\rho}{dt_n\left( t_n^{-1}(u)\right)} \right)\left( \sqrt{n+(t_n^{-1}(u))^2}\right)-\left( t_n^{-1}(v)-\rho t_n^{-1}(u)\right)\left(\frac{2  t_n^{-1}(u)\left(n+(t_n^{-1}(u))^2\right)^{-\frac{1}{2}}}{dt_n\left( t_n^{-1}(u)\right)} \right)}{n+(t_n^{-1}(u))^2}.
\]
Here, $dt_n$ denotes the probability density function of the Student \textit{t}-distribution with $n$ degrees of freedom. We proceed by simplifying the expression further.
\begin{align*}
    \frac{\partial}{\partial u}\left(\frac{t_n^{-1}(v)-\rho t_n^{-1}(u)}{\sqrt{n+(t_n^{-1}(u))^2}} \right) &= \frac{-\rho\left(n+(t_n^{-1}(u))^2\right)-t_n^{-1}(u)\left(t_n^{-1}(v)-\rho t_n^{-1}(u)\right)}{dt_n\left(t_n^{-1}(u)\right)\left(n+(t_n^{-1}(u))^2\right)^{\frac{3}{2}}}\\
    &= \frac{-\rho n  -\rho (t_n^{-1}(u))^2  -t_n^{-1}(u)t_n^{-1}(v) +\rho (t_n^{-1}(u))^2}{dt_n\left(t_n^{-1}(u)\right)\left(n+(t_n^{-1}(u))^2\right)^{\frac{3}{2}}}\\
    &= \frac{-\rho n-t_n^{-1}(u)t_n^{-1}(v)}{dt_n(t_n^{-1}(u))\left(n+(t_n^{-1}(u))^2\right)^{3/2}}.
\end{align*}
Therefore, $\frac{\partial}{\partial u}h(u)=0$ if and only if $-\rho n-t_n^{-1}(u)t_n^{-1}(v)=0$.  The only point satisfying this condition is given by $u^*=t_n\left(\frac{-\rho n}{t_n^{-1}(v)}\right)$.

\section*{Appendix B}
In this appendix, we provide the proof of Lemma \ref{ttkm}. Given a value \( v \in (0,1) \), the PELCoV\(_v\) is defined as the solution \( u_v \) to the equation \( h(u_v) = v \), where the function \( h(u) \) is defined in Equation~\eqref{hu}. By using the properties of the inverse cumulative distribution function \( t_{n+1}^{-1} \), we can express this equation equivalently as:
\begin{equation}
\frac{t_n^{-1}(v) - \rho\, t_n^{-1}(u_v)}{\sqrt{\dfrac{n + (t_n^{-1}(u_v))^2}{n+1}(1 - \rho^2)}} = t_{n+1}^{-1}(v).
\end{equation}

To simplify the expression, let us denote:
\[
a = t_n^{-1}(u_v), \quad b = t_n^{-1}(v), \quad c = t_{n+1}^{-1}(v).
\]
Substituting these into the equation, we obtain:
\begin{equation*}
\frac{(b - \rho a)^2}{\dfrac{(n + a^2)(1 - \rho^2)}{n+1}} = c^2.
\end{equation*}
Multiplying both sides and rearranging terms yields the following quadratic equation in \( a \):
\[
a^2\left( \rho^2(n+1) - c^2(1 - \rho^2) \right) - 2ab\rho(n+1) + b^2(n+1) - nc^2(1 - \rho^2) = 0.
\]
This quadratic equation has solutions given by:
\[
a = \frac{b\rho(n+1) \pm \sqrt{k}}{\rho^2(n+1) - c^2(1 - \rho^2)},
\]
where the discriminant \( k \) is defined as:
\[
k = b^2 \rho^2 (n+1)^2 - \left( \rho^2(n+1) - c^2(1 - \rho^2) \right) \left( b^2(n+1) - nc^2(1 - \rho^2) \right).
\]

Returning to the original variables, we obtain:
\[
t_n^{-1}(u_v) = \frac{t_n^{-1}(v)\rho(n+1) \pm \sqrt{k}}{\rho^2(n+1) - \left( t_{n+1}^{-1}(v) \right)^2 (1 - \rho^2)}.
\]
Finally, solving for \( u_v \), we arrive at the expression:
\[
u_v = t_n\left( \frac{t_n^{-1}(v)\rho(n+1) \pm \sqrt{k}}{\rho^2(n+1) - \left( t_{n+1}^{-1}(v) \right)^2 (1 - \rho^2)} \right).
\]

\bibliographystyle{chicago}
\bibliography{bibliography/bibliography}

\end{document}